\documentclass[%
 reprint,
superscriptaddress,
 amsmath,amssymb,
 aps,
]{revtex4-2}

\usepackage{graphicx}
\usepackage{dcolumn}
\usepackage{bm}
\usepackage{amsmath,amssymb,amsthm,mathtools}
\usepackage{physics}
\usepackage{comment}
\usepackage{mathtools}
\usepackage{xcolor}
\usepackage{mathalpha}

\newtheorem{theorem}{Theorem}

\newtheorem{lemma}{Lemma}

\begin{document}

\preprint{APS/123-QED}

\title{Non-Gaussianity in cat codes: global incompatibility and local geometric alignment with the magic resource}

\author{Yuwei Zhu}
 \affiliation{Okinawa Institute of Science and Technology Graduate University, Onna-son, Okinawa 904-0495, Japan
}
\author{Seungbeom Chin}
\affiliation{Okinawa Institute of Science and Technology Graduate University, Onna-son, Okinawa 904-0495, Japan
}
\author{William J. Munro}
\affiliation{Okinawa Institute of Science and Technology Graduate University, Onna-son, Okinawa 904-0495, Japan
}
\author{Kae Nemoto}
\affiliation{Okinawa Institute of Science and Technology Graduate University, Onna-son, Okinawa 904-0495, Japan
}
\affiliation{National Institute of Informatics, 2-1-2 Hitotsubashi, Chiyoda-ku, Tokyo 101-8430, Japan}

\begin{abstract}
Non-Gaussianity is an essential resource for genuine quantum advantages in continuous-variable quantum systems and is regarded as a counterpart of the magic resource in discrete-variable systems.
Recently, an exact correspondence between non-Gaussianity (NG) and the magic resource was identified within the Gottesman--Kitaev--Preskill (GKP) encoding framework. Whether such a relation persists beyond GKP encoding, however, remains unclear.
Here, we address this question in the cat-code setting. By comparing the Wigner logarithmic negativity (WLN) and a magic measure defined from a phase-operator basis, we analyze the resource geometry of non-degenerate $d$-peaked cat states. We find that cat codes do not inherit the global value-preserving GKP magic--NG equivalence, but their asymptotic WLN geometry allows a constructive local alignment with the magic-measure geometry. 
Specifically, under a distinguished SU($d$) asymptotic cat code, we establish a sector-dependent alignment between WLN level sets and magic-measure level sets based on their intrinsic local geometries. 
These results identify both the global incompatibility and the locally constructive relation between magic and non-Gaussian resources in cat codes, suggesting a geometry-based approach to resource correspondences beyond the GKP framework.

\end{abstract}

\maketitle


\section{Introduction}
Continuous-variable (CV) quantum systems, characterized by infinite-dimensional Hilbert spaces, constitute a major framework for quantum information processing, and provide a natural setting for tasks such as universal CV quantum computation \cite{LloydBraunstein1999}. A central insight in this framework is that Gaussian ingredients alone are fundamentally limited: broad classes of Gaussian processes remain efficiently classically simulable \cite{Bartlett2002,Mari2012,Veitch2013,RahimiKeshari2016,Walschaers2021}, and therefore cannot by themselves account for genuine quantum advantage in such tasks. This places non-Gaussianity (NG) at the center of CV quantum information, across platforms ranging from quantum optics to superconducting bosonic modes~\cite{Konno2024,sivak2023}. Correspondingly, resource-theoretic quantifiers such as Wigner logarithmic negativity (WLN) and related measures have become useful computable indicators of this resource \cite{Kenfack2004,Genoni2010,Takagi2018,Albarelli2018}.

In discrete-variable (DV) quantum computation, an analogous role is played by non-stabilizerness, or the quantum magic resource. Clifford-stabilizer computation is classically tractable, whereas non-stabilizer resources are necessary for universal quantum computation \cite{Gottesman1998,BravyiKitaev2005,Veitch2014}. This has given rise to a well-developed resource-theoretic understanding of the magic resource. Since non-Gaussian and magic resources play closely parallel roles in their respective computational settings, the connection between them has long been conceptually anticipated~\cite{Gross2006,Hahn2022,Bu2023}. Major recent advances have established~\cite{Hahn2022,Feng2024,Hahn2025}, through Gottesman--Kitaev--Preskill (GKP) encoding~\cite{GKP2001}, a quantitative correspondence between the two resources. More specifically, in an appropriate phase-space operator basis, the discrete-variable magic quasidistribution coincides exactly with that of the encoded GKP state Wigner function, so that the associated $l_p$-norm-based resource measures agree on the two sides~\cite{Hahn2025}.

Despite the intriguing and remarkable exactness of the GKP correspondence, it appears to rely on the special phase-space structure of the Gottesman--Kitaev--Preskill (GKP) encoding, including lattice periodicity and stabilizer structure.
This raises the question of \emph{whether a weaker, geometry-based relation between magic and non-Gaussian resources can persist beyond the GKP framework.} One natural direction is to ask whether such a relation can emerge in other bosonic codes.
Cat codes provide an appealing setting for this question, as they are finite-energy coherent-state encodings and constitute a well-developed bosonic-code platform. Schr\"odinger-cat superpositions were among the earliest bosonic encodings proposed for error-protected logical qubits, and have since developed into a broad framework for efficient bosonic quantum computation, especially because of their strong noise bias under photon loss and the possibility of bias-preserving operations \cite{Cochrane1999,Ralph2003,Mirrahimi2014,Bergmann2016,MIRRAHIMI2016,Guillaud2019,Puri2020,CAI2021}. Meanwhile, cat states are highly accessible non-Gaussian bosonic states and have been extensively studied both theoretically and experimentally \cite{Vlastakis2013,Leghtas2015,Touzard2018,Grimm2020,Lescanne2020}. Despite the operational role of cat-state non-Gaussianity in bosonic quantum information processing, its geometric structure in phase space remains unclear. In particular, it is not evident whether cat encodings admit a meaningful counterpart of the magic--NG correspondence.

Our geometrical analysis demonstrates that the global level-set structures of the magic measure and WLN in the cat--code setting are incompatible; therefore, \emph{the form of any magic–NG relation is encoding-dependent.}
On the other hand, in the large-separation regime, the NG quantified by the WLN exhibits a phase-rotation-invariant local geometric structure. This allows us to define a distinguished SU($d$) asymptotic cat code (ACC) under which the level sets of the magic measure and WLN are locally aligned. 
Moreover, with respect to the local contour-deviation metric used here, the physical cat-code construction reaches comparable geometric tolerance at moderate cat amplitudes, whereas the approximate GKP family considered here requires much larger scale-separation parameters. Thus, we can construct an ACC that provides a distinct and physically accessible route to probing local geometric connections between the magic resource and NG.

This work is organized as follows. In Sec.~\ref{sec:structure}, we analyze the non-Gaussian structure of $d$-peaked cat states in the large-separation regime and identify the associated geometry of the WLN. In Sec.~\ref{sec:compatible}, we develop a sector-eigenbasis description of the qudit magic measure and show that its global level-set structure is intrinsically incompatible with the WLN geometry of standard cat encodings. We then use the local invariants shared by the two resources to construct, for arbitrary dimension $d$, an ACC whose encoded physical WLN locally aligns with the level-set geometry of the logical magic measure. In Sec.~\ref{sec:realizations}, we compare the finite-parameter requirements for realizing this local alignment with those of approximate GKP codewords. Finally, Sec.~\ref{sec:conclusions} presents our conclusions and outlook.

\section{Asymptotic WLN geometry of qudit cat states}\label{sec:structure}
In this section, we analyze the non-Gaussian structure of cat states with multiple coherent-state superpositions. We consider a generalized cat state defined as a superposition of $d$ coherent states,
\begin{equation}
    \ket{\psi_d}
    =
    \frac{1}{\sqrt{\mathcal N_d}}
    \sum_{i=0}^{d-1} c_i \ket{\alpha_i},
    \label{eq:quditCat}
\end{equation}
where $\{\alpha_i\}_{i=0}^{d-1}$ are arbitrary coherent-state amplitudes and
$\{c_i\}_{i=0}^{d-1}$ is a normalized complex coefficient vector satisfying $\sum_{i=0}^{d-1}|c_i|^2=1$. Without loss of generality, one may choose $c_0$ as a non-negative real number. Since coherent states are not mutually orthogonal, the normalization factor is
\begin{equation}
    \mathcal N_d
    =
    \sum_{i,j=0}^{d-1}
    c_i^* c_j
    \exp\!\left[
        -\frac{|\alpha_i|^2+|\alpha_j|^2}{2}
        + \alpha_i^*\alpha_j
    \right].
    \label{eq:normalized}
\end{equation}
In the large-separation regime, namely $|\alpha_i-\alpha_j|\gg 1$ for all $i\neq j$, we have $\mathcal N_d \approx \sum_i |c_i|^2 = 1$. We examine the corresponding Wigner function of \eqref{eq:quditCat}. Adopting quadrature variables $(x,p)$ with $[\hat x,\hat p]=i$, and writing $\alpha_i=(x_i+i p_i)/{\sqrt2}$, the Wigner function of $\ket{\psi_d}$ can be decomposed into diagonal Gaussian
contributions and off-diagonal interference terms as
\begin{equation}
\begin{split}
    &W_{\ket{\psi_d}}(x,p)
    =\\
    &\frac{1}{\mathcal N_d}
    \left[
        \sum_{i=0}^{d-1}|c_i|^2\,G_i(x,p)
        +
        2\sum_{0\le i<j\le d-1}|c_i c_j|\,I_{\alpha_i,\alpha_j}(x,p)
    \right],
    \end{split}
    \label{eq:wigner}
\end{equation}
where
\begin{equation}
\begin{split}
    G_{\alpha_i}(x,p)
    &=\frac{1}{\pi}
    \exp\!\left[
        -(x-x_i)^2-(p-p_i)^2
    \right],\\
    I_{\alpha_i,\alpha_j}(x,p)
    &=
    G_{(\alpha_i+\alpha_j)/2}\cos\!\left[\Delta_{ij}(x,p)+\phi_{ij}\right].
    \label{eq:dCatCoeff}
\end{split}
\end{equation}
Here $G_{\alpha_i}$ is the Gaussian peak centered at $(x_i,p_i)$ and $I_{\alpha_i,\alpha_j}(x,p)$ is the pairwise interference contribution. The factor $G_{(\alpha_i+\alpha_j)/{2}}$ is the Gaussian peak centered at the midpoint between the two coherent-state centers, and $\Delta_{ij}(x,p)
    =
    (p_i-p_j)x-(x_i-x_j)p
    +(x_i p_j-p_i x_j)/2,$ where $\phi_{ij}=\arg(c_i)-\arg(c_j)$ is the relative phase between the coefficients $c_i$ and $c_j$. The Wigner function of a $d$-component cat state consists of two distinct parts:
(i) $d$ Gaussian peaks associated with the classical phase-space locations of the coherent components, and (ii) $\binom{d}{2}$ pairwise interference fringes. Each interference term is modulated by a Gaussian envelope centered at the midpoint between $\alpha_i$ and $\alpha_j$, while its oscillation wavevector is set by the phase-space separation between the two coherent states. In particular, larger separations produce more rapidly oscillating fringes. Since each diagonal Gaussian contribution is nonnegative, any negativity of the Wigner function must originate from the interference terms. This will play a central role in the following resource-theoretic analysis based on the WLN.

\subsection{WLN and level-set structures in the large-separation regime}
To evaluate the WLN of the $d$-peaked cat state in Eq.~\eqref{eq:quditCat}, we consider a \textit{non-degenerate large-separation limit} of the coherent-state constellation. More precisely, let \(\{\beta_i\}_{i=0}^{d-1}\) be a fixed reference constellation. We assume that it is non-degenerate in the sense that
\begin{equation}
\beta_i+\beta_j \neq \beta_k+\beta_l,
\qquad
\text{unless } \{i,j\}=\{k,l\},
\label{eq:nondegenerate_constellation}
\end{equation}
for any unordered pairs $i,j$ and $k,l$, including the diagonal cases $i=j$ and $k=l$. We then define the scaled coherent amplitudes by
\begin{equation}
\alpha_i=R\beta_i,
\;R\to\infty.
\end{equation}
The resulting constellation $\{\alpha_i\}_{i=0}^{d-1}$ is said to be in the non-degenerate large-separation limit. For any \(i\neq j\), it satisfies
\begin{equation}
|\alpha_i-\alpha_j|
\rightarrow \infty
\end{equation}
and the condition in Eq.~\eqref{eq:nondegenerate_constellation} ensures that the centers of all Gaussian envelopes appearing in the Wigner-function decomposition are mutually distinct. 
Under the scaling \(\alpha_i=R\beta_i\), the separations between any two distinct envelope centers grow linearly with \(R\), whereas the Gaussian envelope widths remain \(\mathcal O(1)\). Therefore, in the large-separation limit, the overlaps between different diagonal and interference contributions are exponentially suppressed, while the coherent components also become asymptotically orthogonal. Consequently, the absolute-value integral entering the WLN decomposes asymptotically into a sum of individual localized contributions, leading to the closed-form WLN expressions derived below. For brevity, we will refer to this limit simply as the large-separation limit.

To quantify the non-Gaussianity of a continuous-variable state, we use the
Wigner logarithmic negativity (WLN)~\cite{Takagi2018,Albarelli2018}, defined by
\begin{equation}
    \mathrm{WLN}(\rho)
    =
    \log_2\!\left(\iint_{\mathbb{R}^2} |W_\rho(x,p)|\,dx\,dp\right).
\end{equation}
In the large-separation regime, each pairwise interference term in Eq.~\eqref{eq:dCatCoeff} becomes rapidly oscillatory under a Gaussian envelope. As a result, for every pair \((i,j)\), the interference term satisfies the asymptotic relation
\begin{equation}
\lim_{|\alpha_i-\alpha_j|\to\infty}
    \iint_{\mathbb{R}^2}
    |I_{\alpha_i,\alpha_j}(x,p)|\,dx\,dp
    =
    \frac{2}{\pi},
    \label{eq:limit}
\end{equation}
which shows that the absolute value of each interference contribution averages to \(2/\pi\) under the Gaussian envelope in the large-separation limit. See Appendix~\ref{app:limit} for more mathematical details. Meanwhile, the normalization factor
$\mathcal N_d$ approaches unity as all coherent-state overlaps are suppressed exponentially. These yield the non-Gaussianity of the $d$-peaked cat state in the asymptotic regime.

\begin{theorem}[Asymptotic WLN]
\label{thm:asymptotic_d_cat_WLN}
Consider the $d$-peaked cat state \(\ket{\psi_d}\) in Eq.~\eqref{eq:quditCat}, and assume that the coherent-state constellation is taken in the non-degenerate large-separation limit. Then the Wigner logarithmic negativity of $\ket{\psi_d}$ is 
\begin{equation}
    \mathrm{WLN}_\infty(\ket{\psi_d})
    =
    \log_2\!\left(
        1+\frac{4}{\pi}
        \sum_{0\le i<j\le d-1}|c_i c_j|
    \right),
    \label{eq:dWLN}
\end{equation}
where the subscript \(\infty\) denotes this limit.
\end{theorem}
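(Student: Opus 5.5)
The plan is to evaluate $\iint|W_{\ket{\psi_d}}|\,dx\,dp$ directly from the decomposition \eqref{eq:wigner} and show that, as $R\to\infty$, it converges to $\sum_i|c_i|^2\iint G_i+2\sum_{i<j}|c_ic_j|\iint|I_{\alpha_i,\alpha_j}|$. The limit \eqref{eq:limit} then gives $1+\frac{4}{\pi}\sum_{i<j}|c_ic_j|$, and continuity of $\log_2$ yields \eqref{eq:dWLN}. Two preliminary facts are needed. First, $\mathcal N_d\to 1$, because every off-diagonal term in \eqref{eq:normalized} has modulus $|c_ic_j|e^{-|\alpha_i-\alpha_j|^2/2}=|c_ic_j|e^{-R^2|\beta_i-\beta_j|^2/2}\to0$. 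Second, each $G_{\alpha_i}$ is a normalized Gaussian, so $\iint G_{\alpha_i}=1$. The overall factor $1/\mathcal N_d$ can therefore be pulled out and sent to $1$.

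The key step is the decoupling of the absolute value. By the non-degeneracy condition \eqref{eq:nondegenerate_constellation}, the $d+\binom d2$ envelope centers $R\beta_i$ and $R(\beta_i+\beta_j)/2$ are pairwise distinct. Their mutual distances are at least $\delta R$ for some fixed $\delta>0$. I partition the plane into the Voronoi cells $V_k$ of these centers. Inside a cell $V_k$, the term attached to center $k$ is kept, and every other term $T_\ell$ satisfies $|T_\ell(x,p)|\le \frac1\pi e^{-|z-z_\ell|^2}$ with $|z-z_\ell|\ge \delta R/2$ on $V_k$. The triangle inequality $\bigl||a+b|-|a|\bigr|\le|b|$ then gives
\[
\Bigl|\iint|W|-\sum_k\iint_{V_k}|T_k|\Bigr|\le\sum_k\sum_{\ell\ne k}\iint_{V_k}|T_\ell|.
\]
The right-hand side tends to $0$, since each summand is bounded by a Gaussian tail mass $\lesssim e^{-\delta^2R^2/8}$ (multiplied by at most a polynomial in $R$, or simply integrated over the exterior of a disk of radius $\delta R/2$). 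In the same way, $\iint_{V_k}|T_k|=\iint_{\mathbb R^2}|T_k|-o(1)$, because the complement of $V_k$ lies outside a disk of radius $\delta R/2$ around $z_k$. This yields the claimed additivity.

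Finally, I invoke \eqref{eq:limit}, taking Appendix~\ref{app:limit} as given, for each pair, since $|\alpha_i-\alpha_j|=R|\beta_i-\beta_j|\to\infty$. The phase $\phi_{ij}$ and the constant part of $\Delta_{ij}$ do not affect the limit: after translating to the envelope center, the term is $\frac1\pi e^{-|u|^2}|\cos(k\cdot u+\theta)|$ with $|k|\to\infty$. The average of $|\cos|$ over a period is $2/\pi$, uniformly in $\theta$ (Riemann–Lebesgue applied to the Fourier series of $|\cos|$). The main obstacle is making the decoupling estimate rigorous and uniform. The crucial input is the non-degeneracy assumption: without it, two interference envelopes could share a center, their cosines would add coherently, and additivity of $|\cdot|$ would fail. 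Everything else reduces to Gaussian tail bounds.
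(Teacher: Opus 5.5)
Your proposal is correct and follows the paper's route: $\mathcal N_d\to1$, decoupling of $\iint|W|$ into separate localized diagonal and fringe contributions (which is where non-degeneracy enters), and then the high-frequency $|\cos|$ averaging of Appendix~\ref{app:limit} applied to each pair. Your Voronoi-cell and triangle-inequality estimate, and your observation that the averaging holds uniformly in the phase, make rigorous two points the paper only asserts; the uniformity is needed because the envelope center and the constant part of $\Delta_{ij}$ move with $R$, whereas the appendix lemma is stated for fixed $f$ and $\phi$.
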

Theorem~\ref{thm:asymptotic_d_cat_WLN} follows from the asymptotic decomposition of the absolute-value integral in Eq.~\eqref{eq:limit}. In this limit, different diagonal and pairwise interference envelopes are non-overlapping up to exponentially small corrections. Since each conjugate off-diagonal pair has already been combined into the real cosine fringe $2|c_i c_j|I_{\alpha_i,\alpha_j}$ in Eq.~\eqref{eq:wigner}, the absolute-value integral decomposes into independent localized contributions. Using $\sum_i |c_i|^2=1$, WLN in Eq.~\eqref{eq:dWLN} can equivalently be written as $\mathrm{WLN}_\infty(\ket{\psi_d})
=\log_2[1+2((\sum_{i=0}^{d-1}|c_i|)^2-1)/\pi]$. 

Given the expression of Eq.~\eqref{eq:dWLN}, we now identify several key properties in the asymptotic regime. These properties give rise to a rigid geometric structure of the WLN level sets, which underlies the code constructions developed later. First, Eq.~\eqref{eq:dWLN} immediately identifies the extremal non-Gaussianity of the cat states in Eq.~\eqref{eq:quditCat} in the asymptotic regime. 
The asymptotic WLN reaches its minimum value,
\(\mathrm{WLN}_{\infty}^{\min}(d)=0\), when the state is supported on a single coherent component. These \(d\) coherent-component states form singular points of the asymptotic WLN geometry. It attains its maximum value
\begin{equation}
        \mathrm{WLN}_{\infty}^{\max}(d)
    =
    \log_2\!\left(1+\frac{2}{\pi}(d-1)\right),
    \label{eq:WLNextremal}
\end{equation}
for $|c_0|=\cdots=|c_{d-1}|=1/\sqrt d$, when all amplitudes are equally distributed. This upper bound grows logarithmically with the number of coherent components $d$, as illustrated in Fig.~\ref{fig:maxWLN}.
\begin{figure}
    \centering
    \includegraphics[width=1\linewidth]{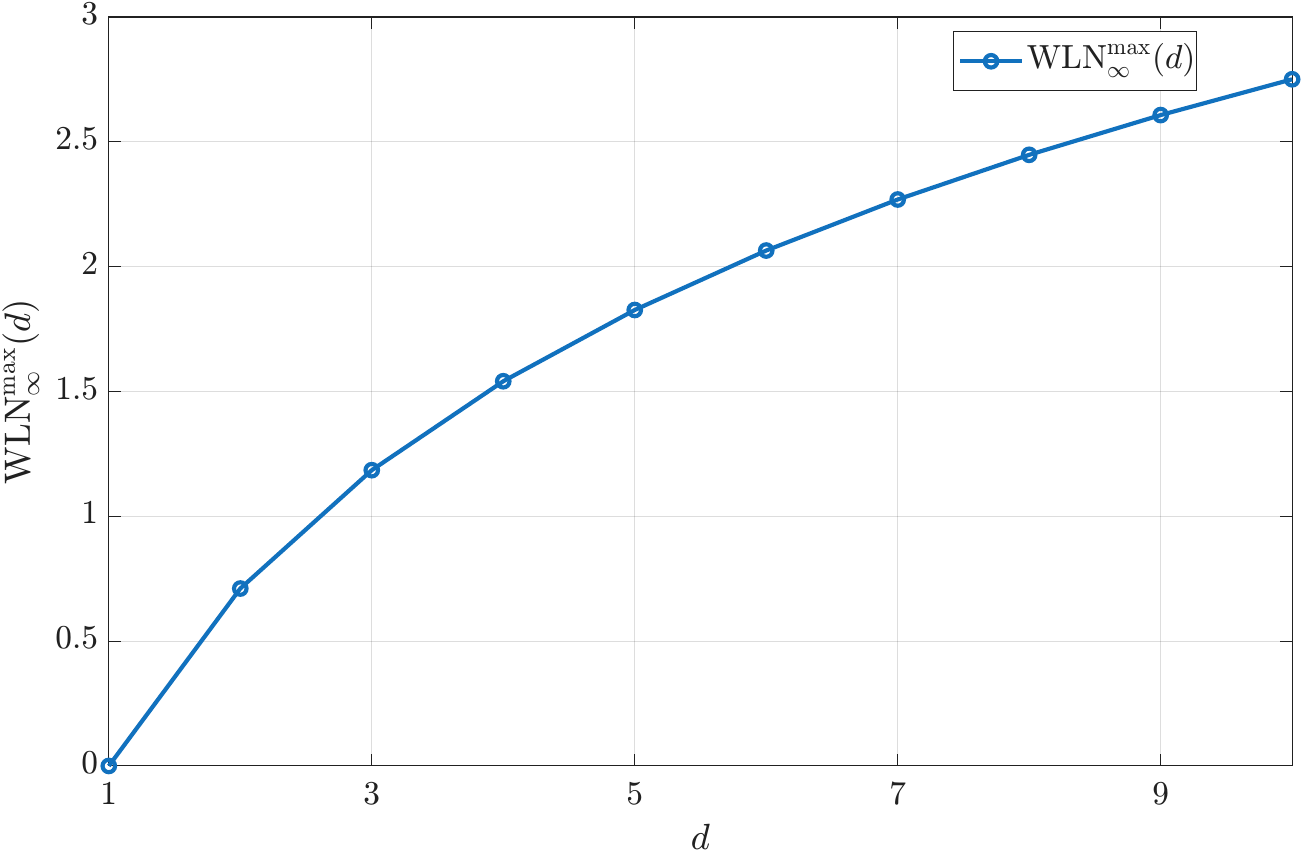}
    \caption{Asymptotic upper bound of the WLN for $d$-component cat states in the large-separation regime, $d=1,...,10$. The bound is attained for the equal-amplitude superposition and increases logarithmically with the number of coherent components $d$, according to Eq.~\eqref{eq:WLNextremal}.}
    \label{fig:maxWLN}
\end{figure}

Second, in the asymptotic regime, the WLN depends only on the amplitudes $|c_i|$ and is invariant under the relative phases of the coefficients. In other words, if we fix an asymptotic WLN value $\mathrm{WLN}_\infty(\ket{\psi_d})=C$ to describe the level sets of states with equal WLN, the constraints on the states are: (i) the amplitudes satisfy
\begin{equation}
    \sum_{i=0}^{d-1}|c_i|
    =
    \sqrt{1+\frac{\pi}{2}\bigl(2^C-1\bigr)};
    \label{eq:constWLNAmp}
\end{equation}
(ii) after quotienting out the global phase, the transformations
\begin{equation}
    c_i\mapsto e^{i\theta_i}c_i,\;i=1,\dots,d-1
\end{equation}
generate $d-1$ natural $U(1)$ phase directions along which $\mathrm{WLN}_\infty$ remains unchanged.
Thus, each asymptotic WLN level set is characterized by a fixed amplitude constraint together with an associated family of phase-rotation symmetries in state space. 

Third, the asymptotic WLN structure admits a natural encoding interpretation. In the large-separation limit, the coherent states in $\{|\alpha_i\rangle\}_{i=0}^{d-1}$ become mutually orthogonal and therefore define a valid $d$-dimensional code subspace. Within this subspace, one may consider logical basis changes generated by
\begin{equation}
    |{\mu}\rangle_L
    =
    \sum_{i=0}^{d-1} U_{\mu i}\,|\alpha_i\rangle,
    \qquad U\in \operatorname{SU}(d).
    \label{eq:sudInvariance}
\end{equation}
Eq.~\eqref{eq:dWLN} defines a coefficient-space geometry with respect to the asymptotically orthogonal coherent-state components. Choosing another logical basis represents this same coefficient-space geometry in rotated logical coordinates. In this sense, the intrinsic phase-rotation structure of the asymptotic WLN can be found in different logical bases. This basis freedom will be used below to construct a cat-code basis whose local WLN geometry aligns with that of the logical magic measure.


\subsection{Asymptotic WLN structure in even/odd cat codes}
To visualize the asymptotic WLN structure derived above, we specialize it to a simple bosonic-code setting. We focus on the \(d=2\) case and use the standard even/odd cat-code basis,
\begin{equation}
\begin{split}
\ket{0}_L&=
\frac{1}{\sqrt{\mathcal N_{0|2}}}
\bigl(\ket{\alpha}+\ket{-\alpha}\bigr),\\
\ket{1}_L
&=
\frac{1}{\sqrt{\mathcal N_{1|2}}}
\bigl(\ket{\alpha}-\ket{-\alpha}\bigr).
\label{eq:evenOddCat}
\end{split}
\end{equation}
It is convenient to investigate the WLN structure of logical states in this code. The logical states \(\ket{0}_L\) and \(\ket{1}_L\) are exactly orthogonal for any finite \(|\alpha|\), which allows the physical WLN landscape of encoded qubit states to be tracked continuously as \(|\alpha|\) varies. Moreover, the two-point constellation \(\{\alpha,-\alpha\}\) satisfies the non-degeneracy condition. Hence, in the limit \(|\alpha|\to\infty\), this cat code realizes the non-degenerate large-separation regime in its simplest form. In this limit, the cat basis is related to the asymptotically orthogonal coherent-state basis \(\{\ket{\alpha},\ket{-\alpha}\}\) by a \(2\times2\) discrete Fourier transform, allowing the asymptotic WLN structure derived above to be visualized directly in the qubit cat-code setting.

We parameterize an arbitrary pure logical qubit state as
\begin{equation}
    \ket{\psi(\theta,\phi)}
=
\cos(\theta/2)\ket{0}_L
+
e^{i\phi}\sin(\theta/2)\ket{1}_L .
\end{equation}
In the large-separation limit, Theorem~\ref{thm:asymptotic_d_cat_WLN} gives
\begin{equation}
    \mathrm{WLN}_\infty(\ket{\psi})
    =
    \log_2\!\left(
        1+\frac{2}{\pi}\sqrt{1-x^2}
    \right),
    \label{eq:qubitCatWLN}
\end{equation}
where $x=\sin\theta\cos\phi $ is the logical Bloch-sphere coordinate. Thus the asymptotic WLN depends only on \(x\), and its level sets are circles of constant \(x\) around the logical \(x\)-axis. The minimum \(\mathrm{WLN}_\infty(\ket{\psi})=0\) is attained at
\begin{equation}
        \ket{\pm}_L=\frac{\ket{0}_L\pm\ket{1}_L}{\sqrt2},
\end{equation}
which asymptotically coincide with the coherent states \(\ket{\pm\alpha}\). The maximum is attained on the great circle \(x=0\), namely the \(y\)--\(z\) equator. Viewed along the \(x\)-axis, the ideal asymptotic landscape therefore appears as concentric circular contours in the projected \(y\)--\(z\) disk, as shown in Fig.~\ref{fig:catWLNx}(d).

Since the even/odd cat basis remains exactly orthogonal for every finite
$|\alpha|$, the exact WLN defines a well-posed physical landscape throughout the
entire convergence process. Figs.~\ref{fig:catWLNx}(a)--(c) show how the finite-$|\alpha|$ landscapes progressively recover the asymptotic structure. We can see that, for small $|\alpha|$, the contours are visibly distorted relative to the ideal
circular form. As $|\alpha|$ increases, these distortions are steadily suppressed: the level sets become increasingly circular, the minimum moves toward the logical states $\ket{\pm}_L$, and the maximal-WLN region approaches the outer boundary of the equator $x=0$. This indicates that the circular contour structure and the closed form in Eq.~\eqref{eq:qubitCatWLN} are strictly asymptotic. At any finite $|\alpha|$, the exact WLN deviates from this ideal geometry even though the logical basis itself remains exactly orthogonal. Quantifying these finite-$|\alpha|$ deviations will later allow us to assess how robustly the asymptotic WLN structure can be realized physically.

For completeness, we also introduce in Appendix~\ref{app:deltaCat} an intriguing $\delta$-cat encoding, obtained by replacing coherent-state wave packets by
$\delta$ distributions. Although this encoding is nonphysical, it reproduces the ideal large-separation WLN structure exactly and therefore serves as a useful reference model for the asymptotic results.

\begin{figure} 
\centering 
\includegraphics[width=1\linewidth]{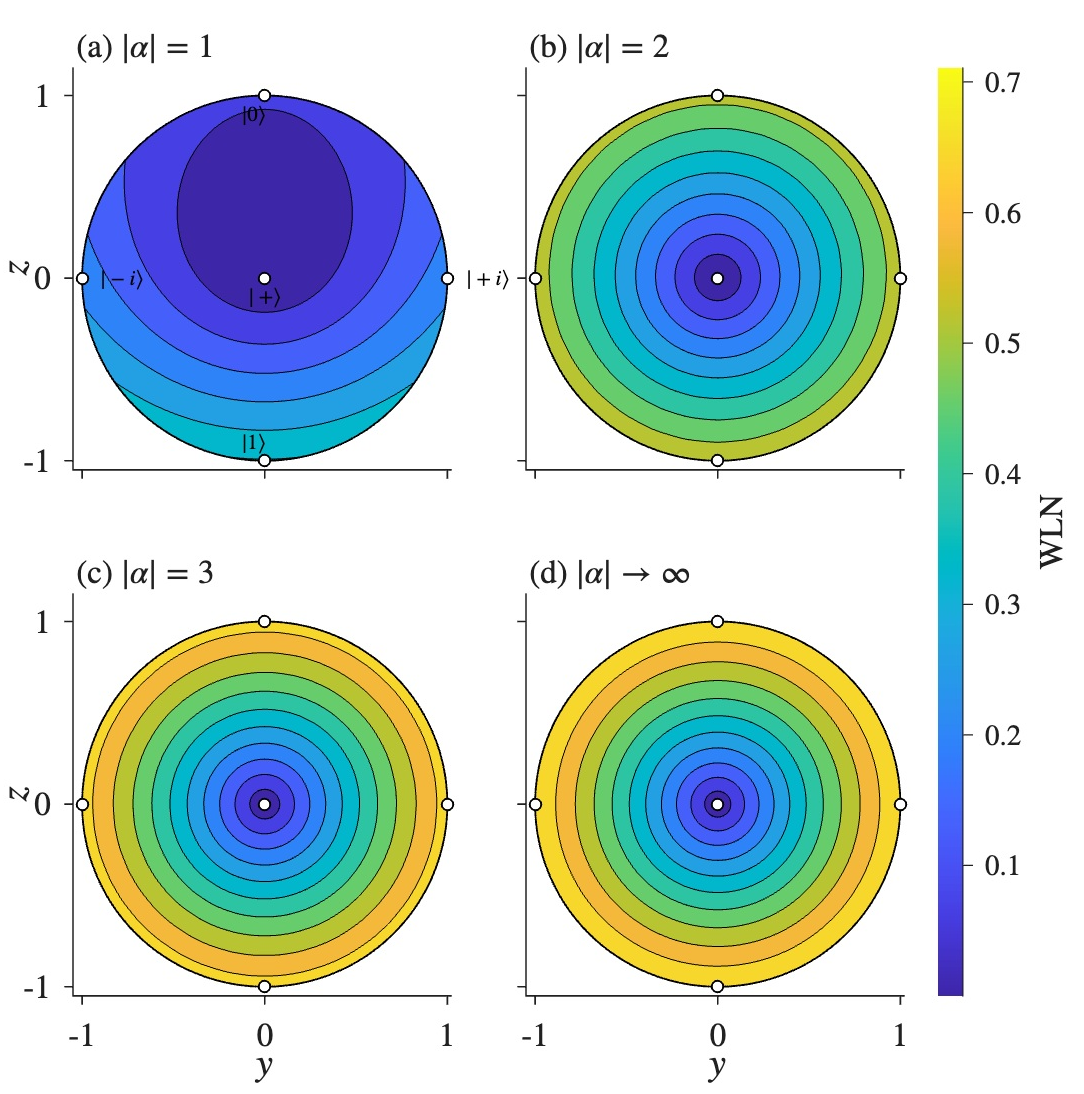} 
\caption{
WLN contours on the logical Bloch sphere for the even/odd cat code, viewed along the positive logical $x$-axis. (d) shows the ideal large-separation limit $|\alpha|\to\infty$, where the equal-WLN contours are concentric circles centered on the $x$-axis; the minimum occurs rigorously at $\ket{\pm}_L$, while the maximum is attained on the outer boundary corresponding to the equator $x=0$. (a)--(c) display the exact WLN landscapes for finite $|\alpha|=1,2,3$. As $|\alpha|$ increases, the WLN landscapes converge toward the asymptotic geometry: the contour distortions diminish, the level sets become more nearly circular, and the minimum shifts toward $\ket{\pm}_L$.
}
\label{fig:catWLNx} 
\end{figure}

\section{From global incompatibility to local SU($d$) compatibility with the magic measure}
\label{sec:compatible}
In the previous section, we showed that the asymptotic WLN of $d$-peaked cat states has a rigid local phase-invariant geometric structure. Since an exact magic--NG correspondence has been established for GKP encodings, we now ask whether a meaningful relation can arise in cat codes by comparing this WLN geometry with the geometry of the magic measure. We first show that, when the magic measure is described through a phase-operator quasiprobability basis, its local structure is naturally organized by sign sectors and their associated eigenbases. This structure, locally determined by the sector-eigenbasis, is globally incompatible with the WLN geometry under cat codes. Nevertheless, both geometries possess the same type of local phase-rotation invariance, which provides the route to constructing the SU($d$) asymptotic cat-code compatibility.

\subsection{Sector-eigenbasis structure of the magic measure and global incompatibility with cat-code WLN}
Throughout this section, we use a qudit magic measure $\mathcal M$ defined as the $\ell_1$ norm of the quasiprobability distribution associated with a phase-operator basis
\(\mathcal O_d=\{O_{l,m}\}_{l,m=0}^{d-1}\)~\cite{Hahn2025}. This measure provides a computable quantifier of the magic resource for arbitrary-dimensional pure states; it reduces to mana in odd dimensions and recovers the stabilizer R\'enyi entropy in the multi-qubit case. The definition and the properties used below are summarized in Appendix~\ref{app:magic}.

Since $\mathcal M$ involves absolute values of quasiprobability components, the state space is naturally decomposed into sign sectors. For a fixed sign pattern
\(\epsilon_{l,m}\in\{\pm1\},l,m=0,...d-1\), we define the corresponding sector by
\begin{equation}
S_{\epsilon}
:=
\left\{
\ket{\psi}\ \middle|\
\operatorname{sgn}\!\left[
\operatorname{Tr}\!\left(
O_{l,m}\ket{\psi}\bra{\psi}
\right)
\right]
=
\epsilon_{l,m},
\ \forall\, l,m
\right\}.
\end{equation}
Inside a fixed sector, the absolute values in $\mathcal M$ are resolved. The measure is therefore locally determined by the expectation value of the Hermitian signature operator
\begin{equation}
A_\epsilon:=\sum_{l,m}\epsilon_{l,m}O_{l,m}.
\end{equation}
Let \(v_{\max}\) be a normalized eigenvector of \(A_\epsilon\) associated with its largest eigenvalue, and complete it to an orthonormal eigenbasis
\begin{equation}
    U_\epsilon=(v_{\max},v_1,\dots,v_{d-1})\in U(d).
    \label{eq:u_epsilon}
\end{equation}
This basis is adapted to the local structure of the magic measure in the chosen sign sector.
\begin{lemma}[Local phase invariance of the magic measure in a sign sector]
\label{lemma:magic}
Fix a sign sector \(S_\epsilon\), and let $ U_\epsilon=(v_{\max},v_1,\dots,v_{d-1})$
be an eigenbasis of the corresponding signature operator \(A_\epsilon\). For any state
\(\ket{\psi}\) lying in this local sector, write
\begin{equation}
    \ket{\psi}
    =
    a_{\max} v_{\max}
    +
    \sum_{j=1}^{d-1}a_j v_j,
    \quad a_j\in\mathbb{C},
\end{equation}
with \(a_{\max}\) chosen real and non-negative without loss of generality. Then, within the sign sector \(S_\epsilon\), the magic measure \(\mathcal M(\ket{\psi})\) depends only on the amplitudes \(|a_j|\) and is independent of the relative phases of the coefficients.
\end{lemma}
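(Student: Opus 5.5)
In a fixed sign sector the absolute values in $\mathcal M$ can be removed, so the measure becomes a single quadratic form in $\ket{\psi}$. Diagonalizing that form in the eigenbasis $U_\epsilon$ makes it depend only on the squared moduli of the coefficients. The steps are as follows.

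\emph{Step 1: linearize inside the sector.} By the definition recalled in Appendix~\ref{app:magic}, $\mathcal M(\ket{\psi})$ is a fixed monotone function $f$ of the $\ell_1$ norm $\sum_{l,m}\lvert\operatorname{Tr}(O_{l,m}\ket{\psi}\bra{\psi})\rvert$, possibly with a constant normalization. On $S_\epsilon$ each quasiprobability component has sign $\epsilon_{l,m}$, so $\lvert\operatorname{Tr}(O_{l,m}\rho)\rvert=\epsilon_{l,m}\operatorname{Tr}(O_{l,m}\rho)$. Summing gives
\begin{equation}
\sum_{l,m}\lvert\operatorname{Tr}(O_{l,m}\ket{\psi}\bra{\psi})\rvert=\bra{\psi}A_\epsilon\ket{\psi},
\end{equation}
by linearity of the trace and the definition of $A_\epsilon$. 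Here $A_\epsilon$ is Hermitian because the phase-operator basis consists of Hermitian operators; if it does not, I would use the Hermitian real parts, which is what makes the quasiprobabilities real. Hence $\mathcal M(\ket{\psi})=f(\bra{\psi}A_\epsilon\ket{\psi})$ throughout the sector.

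\emph{Step 2: diagonalize.} Since $U_\epsilon$ is an orthonormal eigenbasis, $A_\epsilon v_{\max}=\lambda_{\max}v_{\max}$ and $A_\epsilon v_j=\lambda_j v_j$. Expanding $\ket{\psi}=a_{\max}v_{\max}+\sum_j a_j v_j$ and using orthonormality kills every cross term $a_k^*a_l\langle v_k|A_\epsilon|v_l\rangle$ with $k\neq l$. This leaves
\begin{equation}
\bra{\psi}A_\epsilon\ket{\psi}=\lambda_{\max}\lvert a_{\max}\rvert^2+\sum_{j=1}^{d-1}\lambda_j\lvert a_j\rvert^2 .
\end{equation}
The right-hand side depends only on the moduli. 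Therefore $\mathcal M$, restricted to $S_\epsilon$, is unchanged under $a_j\mapsto e^{i\theta_j}a_j$. Fixing $a_{\max}\ge0$ just quotients out the global phase, under which $\rho=\ket{\psi}\bra{\psi}$ is already invariant.

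\emph{Main point of care.} The subtle issue is the qualifier ``within the sign sector''. A phase rotation $a_j\mapsto e^{i\theta_j}a_j$ is generally not a symmetry of the individual quasiprobabilities, so it can move a state out of $S_\epsilon$, where a different $A_{\epsilon'}$ governs. The statement should therefore be read as: the restriction of $\mathcal M$ to $S_\epsilon$ equals the phase-independent function $F(\{\lvert a\rvert\})=f\bigl(\lambda_{\max}\lvert a_{\max}\rvert^2+\sum_j\lambda_j\lvert a_j\rvert^2\bigr)$. Equivalently, $\mathcal M$ is phase-invariant along any phase orbit segment that stays in $S_\epsilon$, for example locally around states in the interior of the sector.

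Degenerate eigenvalues cause no problem: any orthonormal eigenbasis diagonalizes $A_\epsilon$, so the formula holds for every choice of $U_\epsilon$. Boundary points where some component vanishes also cause no problem, since there either sign choice gives the same absolute value.
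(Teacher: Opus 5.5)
Your proposal is correct and is essentially the paper's own proof. Inside $S_\epsilon$ the absolute values become the fixed signs $\epsilon_{l,m}$, which gives $d\,\mathcal M(\ket{\psi})=\bra{\psi}A_\epsilon\ket{\psi}$, and expanding in the eigenbasis yields $\lambda_{\max}|a_{\max}|^2+\sum_j\lambda_j|a_j|^2$ (the paper substitutes $|a_{\max}|^2=1-\sum_j|a_j|^2$ to write this as $\lambda_{\max}-\sum_j(\lambda_{\max}-\lambda_j)|a_j|^2$). You spell out the linearization step and the vanishing of the cross terms, which the paper leaves implicit, and your caveat that a phase orbit can leave $S_\epsilon$ matches the paper's remark in the main text.
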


Thus, near the local maximum of the magic measure $\mathcal{M}$, $v_{\max}$, the $\mathcal{M}$-landscape is organized by a $U(1)^{d-1}$ phase-rotation-invariant structure in the sector eigenbasis. This invariance is local to the connected portion of the phase orbit that remains inside $S_\epsilon$; crossing a sign-sector boundary changes the relevant signature operator and therefore the local description of the landscape. This can be understood as the direct analogue, on the magic measure side, of the phase-invariant structure found for the asymptotic WLN of cat states. The proof of Lemma~\ref{lemma:magic} is given in Appendix~\ref{app:magic}.

\paragraph*{Qubit example}--
We first illustrate the sector-eigenbasis structure in the qubit case, where it has a simple Bloch-sphere interpretation. The phase-operator basis in the qubit case can be written as
\begin{equation}
    \mathcal O_2=\{I,\sigma_X,-\sigma_Y,\sigma_Z\},
\end{equation}
which is equivalent to the Pauli basis up to a sign convention. Consider the sign sector \(\epsilon=(1,1,1,1)\), which corresponds to positive expectation values of all operators in \(\mathcal O_2\). The associated signature operator is
\begin{equation}
    A_\epsilon
    =
    I+\sigma_X-\sigma_Y+\sigma_Z
    =
    \begin{pmatrix}
        2 & 1+i\\[4pt]
        1-i & 0
    \end{pmatrix}.
\end{equation}
Its eigenvalues are $\lambda_\pm=1\pm\sqrt3,$ with normalized eigenvectors
\begin{equation}
\begin{split}
    \ket{v_+}
    &=
    \sqrt{\frac{3+\sqrt3}{6}}\ket{0}
    +
    e^{-i\pi/4}
    \sqrt{\frac{3-\sqrt3}{6}}\ket{1},
    \\
    \ket{v_-}
    &=
    -
    e^{-i\pi/4}
    \sqrt{\frac{3-\sqrt3}{6}}\ket{0}
    +
    \sqrt{\frac{3+\sqrt3}{6}}\ket{1}.
    \label{eq:qubitbasis}
\end{split}
\end{equation}

According to Lemma~\ref{lemma:magic}, the eigenvector \(\ket{v_+}\) defines the local $\mathcal{M}$-maximizer in this sign sector, while \(\ket{v_-}\) gives the orthogonal transverse direction. Therefore, for any nearby state in the sector written as $\ket{\psi}=\cos\frac{\eta}{2}\ket{v_+}+e^{i\gamma}\sin\frac{\eta}{2}\ket{v_-}$, within the fixed sign sector, the magic measure $\mathcal{M}$ depends on the weight parameter \(\eta\), but is invariant under the relative phase \(\gamma\). The equal-$\mathcal{M}$ contours are therefore generated by phase rotations around the axis defined by \(\ket{v_+}\).

On the Bloch sphere, \(\ket{v_+}\) corresponds to the direction 
\begin{equation}
    (r_x,r_y,r_z)=(1,-1,1)/\sqrt3,
    \label{eq:axis}
\end{equation}
while \(\ket{v_-}\) is the antipodal state. This indicates that the local level sets of $\mathcal{M}$ are circular contours centered around the axis
$r_x=-r_y=r_z$ in this sector. Repeating the same construction for the other sign sectors leads to the full symmetric qubit magic-measure landscape. This sector-eigenbasis description is consistent with the stabilizer R\'enyi entropy 
\begin{equation}
    \mathcal M(\rho)=(1+|r_x|+|r_y|+|r_z|)/2.
\end{equation}
As shown in Fig.~\ref{fig:qubitMagic}(a), the magic measure is distributed symmetrically over the eight octants, each corresponding to a different sign sector. Within each sector, the contours form circles centered around the corresponding sector eigenaxis. In particular, for the sector \(\epsilon=(1,1,1,1)\), which corresponds to the octant \((+,-,+)\), the distinguished axis is Eq.~\eqref{eq:axis}. Viewing the Bloch sphere along this direction, as in Fig.~\ref{fig:qubitMagic}(b), makes the local circular level-set structure explicit.

\paragraph*{Qutrit example}--
As a higher-dimensional example, we consider the qutrit case. We choose the sign sector in which only the component associated with \(O_{0,0}\) is non-positive, while all other components are non-negative. The corresponding signature operator is
\begin{equation}
    A_\epsilon
    =
    \sum_{l,m=0}^{2}\epsilon_{l,m}O_{l,m}
    =
    \begin{pmatrix}
        1 & 1+i\sqrt3 & 0\\
        1-i\sqrt3 & 3 & -2\\
        0 & -2 & 1
    \end{pmatrix}.
\end{equation}
Its eigenvalues are $\{5,1,-1\}$. Let $\{v_{\max},v_1,v_2\}$ be the  orthonormal eigenbasis, with $v_{\max}$ associated with the largest eigenvalue,
\begin{equation}
\begin{split}
    &\ket{v_{\max}}=\frac{1}{\sqrt6}\ket{0}+\sqrt{\frac23}e^{-i\pi/3}\ket{1}+\frac{1}{\sqrt6}e^{i2\pi/3}\ket{2},\\
    &\ket{v_1}=-\frac{1}{\sqrt2}\ket{0}+\frac{1}{\sqrt2}e^{i2\pi/3}\ket{2},\\
    &\ket{v_2}=\frac{1}{\sqrt3}e^{-i\pi/3}\ket{0}+\frac{1}{\sqrt3}e^{i\pi/3}\ket{1}+\frac{1}{\sqrt3}e^{i\pi/3}\ket{2}.
    \end{split}
\end{equation}
The eigenvector $\ket{v_{\max}}$ defines the local $\mathcal{M}$-maximizer in this sector. A general nearby state can be parameterized as $\ket{\psi}=\cos\frac{\eta_1}{2}\ket{v_{\max}}+\\
    e^{i\gamma_1}\sin\frac{\eta_1}{2}\cos\frac{\eta_2}{2}\ket{v_1}+e^{i\gamma_2}\sin\frac{\eta_1}{2}\sin\frac{\eta_2}{2}\ket{v_2}.$ 
By Lemma~\ref{lemma:magic}, the magic measure depends on the amplitude parameters \(\eta_1,\eta_2\), but not on the phase variables \(\gamma_1,\gamma_2\). Thus the qutrit magic-measure landscape again exhibits a local \(U(1)^2\) phase-rotation invariance in the sector eigenbasis.

\paragraph*{Global incompatibility in cat encoding}--
The sector-eigenbasis structure of the magic measure makes the mismatch with the cat-code WLN geometry explicit. In the large-separation limit, the asymptotic WLN level sets are circles organized around the coherent-state axis, whereas the magic-measure landscape is organized by sign-sector-dependent axes. At finite coherent amplitude, Fig.~\ref{fig:catWLNx} shows that the qubit WLN contours are further distorted away from the ideal circular asymptotic geometry, providing additional evidence of the geometric mismatch. We summarize the asymptotic qubit result as follows.

\begin{theorem}[Global level-set incompatibility in the asymptotic qubit cat encoding]
\label{thm:qubit_global_incompatibility}
In the asymptotic even/odd qubit cat encoding, the WLN landscape and the phase-operator magic landscape do not admit a global level-set-preserving correspondence on the logical Bloch sphere.
\end{theorem}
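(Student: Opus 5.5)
The plan is to make ``global level-set-preserving correspondence'' precise and then refute it in its two natural forms. \emph{Form (a):} one resource is a function of the other on the whole logical Bloch sphere, i.e.\ $\mathrm{WLN}_\infty=g\circ\mathcal M$ or $\mathcal M=h\circ\mathrm{WLN}_\infty$. \emph{Form (b):} there is a homeomorphism $\Phi$ of $S^2$ that maps each level set of $\mathrm{WLN}_\infty$ onto a level set of $\mathcal M$. For $\mathrm{WLN}_\infty$ I will use the closed form of Eq.~\eqref{eq:qubitCatWLN}, which follows from Theorem~\ref{thm:asymptotic_d_cat_WLN}. For $\mathcal M$ I will use the qubit expression $\mathcal M=(1+|r_x|+|r_y|+|r_z|)/2$. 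Both are evaluated in the same logical Bloch coordinates of the even/odd code in Eq.~\eqref{eq:evenOddCat}.

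For form (a), it suffices to exhibit explicit pairs of points on the Bloch sphere.
\begin{itemize}
\item The logical states $\ket{0}_L$, with $\vec r=(0,0,1)$, and $\ket{+}_L$, with $\vec r=(1,0,0)$, both have $\mathcal M=1$, the stabilizer minimum. However, Eq.~\eqref{eq:qubitCatWLN} gives $\mathrm{WLN}_\infty=\log_2(1+2/\pi)$ for $\ket{0}_L$ and $\mathrm{WLN}_\infty=0$ for $\ket{+}_L$. Hence no $g$ exists.
\item Conversely, $\vec r=(0,0,1)$ and $\vec r=(0,1,1)/\sqrt2$ both lie on the equator $x=0$, so they share the maximal $\mathrm{WLN}_\infty$. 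Their magic values differ: $\mathcal M=1$ versus $\mathcal M=(1+\sqrt2)/2$. Hence no $h$ exists either.
\end{itemize}
This shows that neither level-set family refines the other. In particular, no value-preserving or monotone relation of GKP type can hold.

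For form (b), I will compare the topology of the extremal and generic level sets, since these are invariants of any level-set-preserving homeomorphism. A homeomorphism $\Phi$ carrying level sets to level sets induces a bijection of the level values. I would argue that this bijection is continuous and monotone, because it maps the one-parameter family of leaves homeomorphically, and therefore sends the extremal level sets of one function to extremal level sets of the other. The level-set structures of the two functions are:
\begin{itemize}
\item For $\mathrm{WLN}_\infty$, the minimum set is $\{x=\pm1\}$, which is two points. The maximum set is the great circle $x=0$, which is connected and infinite. Every intermediate level set is a union of two disjoint circles.
\item For $\mathcal M$, the minimum set consists of the six stabilizer points $\pm\hat e_k$. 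The maximum set consists of the eight magic points $(\pm1,\pm1,\pm1)/\sqrt3$, which are the sector axes identified via Lemma~\ref{lemma:magic} and Eq.~\eqref{eq:axis}. By the sector-wise circular structure, level sets slightly above the minimum consist of six small loops, and level sets slightly below the maximum consist of eight loops.
\end{itemize}
The extremal sets are then incompatible. A circle cannot be homeomorphic to a finite set, and a two-point set cannot be homeomorphic to a six-point or eight-point set. The component counts of near-extremal level sets (two versus six or eight) give an independent obstruction.

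The main obstacle is the rigorous justification of the step in form (b) that $\Phi$ must map extremal leaves to extremal leaves. My first attempt would identify extremal level sets intrinsically as those leaves that do not separate the sphere into two regions each containing other leaves. Alternatively, I would use the fact that the leaf space is a closed interval whose endpoints are the extremal sets, and that a homeomorphism of leaf spaces preserves endpoints. If this becomes delicate, I would fall back on the component-count invariant. Under $\Phi$, the number of connected components of a level set is preserved. For $\mathrm{WLN}_\infty$ every non-maximal level set has exactly one or two components. For $\mathcal M$, levels near the minimum have six components, and nothing in $\mathrm{WLN}_\infty$ can be their preimage. This already yields the contradiction.
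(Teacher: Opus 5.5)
Your proposal is correct, and it follows a different and considerably more precise route than the paper.

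\textbf{The paper's argument.} The paper's proof is qualitative. It notes that $\mathrm{WLN}_\infty$ depends only on $r_x$, so its level sets are circles about the logical $x$-axis. It notes that the level sets of $\mathcal M$ are organized sector by sector around the axes $|r_x|=|r_y|=|r_z|$. From this it concludes directly that the two level-set decompositions of the Bloch sphere differ.

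\textbf{Your form (a).} This makes that conclusion checkable with explicit witnesses, and both computations agree with Eq.~\eqref{eq:qubitCatWLN} and with $\mathcal M=(1+|r_x|+|r_y|+|r_z|)/2$. The pair $\ket{0}_L,\ket{+}_L$ shows that the $\mathcal M$-partition does not refine the WLN-partition. The pair $(0,0,1),(0,1,1)/\sqrt2$ shows the converse. Together they rule out any value-preserving or monotone GKP-type relation, which is exactly what the paper's sentence asserts.

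\textbf{Your form (b).} This proves something the theorem does not claim: no homeomorphism of $S^2$ carries $\mathrm{WLN}_\infty$ level sets onto $\mathcal M$ level sets. An SU(2) change of cat-code basis acts on the logical Bloch sphere as a rotation, and a rotation is a homeomorphism. So (b) shows that no choice of SU(2) ACC, not only the even/odd one, can achieve a global alignment. This is a rigorous reason why the construction in Sec.~\ref{sec:compatible} is necessarily local. The paper's argument buys brevity and intuition; yours buys an actual proof and this stronger corollary.

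\textbf{Tightening (b).} The component count alone closes it, so you can drop the extremal-leaf step.
\begin{itemize}
\item Every $\mathrm{WLN}_\infty$ level set has at most two components.
\item The minimal $\mathcal M$ level set already has six components.
\item Bijectivity of $\Phi$ forces every $\mathcal M$ level set $N$ to be the image of a WLN level set. For $p\in N$, the image of the WLN level set through $\Phi^{-1}(p)$ is an $\mathcal M$ level set containing $p$, hence equals $N$.
\end{itemize}

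If you keep the endpoint argument, use the leaf-space version. The value map from compact $S^2$ onto a closed interval is a closed quotient map, so the induced bijection of level values is a homeomorphism of intervals and preserves endpoints. Avoid the ``non-separating leaf'' characterization, which is unreliable for disconnected fibres. For example, the maximal WLN level set is a great circle that does separate the sphere.

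A minor wording fix: the non-maximal $\mathrm{WLN}_\infty$ level sets have exactly two components, not ``one or two''.
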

\begin{proof}
Eq.~\eqref{eq:qubitCatWLN} shows that, in the asymptotic even/odd qubit cat code, \(\mathrm{WLN}_\infty\) depends only on the Bloch coordinate \(r_x\). Its level sets are therefore circles around the logical \(x\)-axis, with zero-WLN states at the coherent-state directions. By contrast, the qubit phase-operator magic measure $\mathcal{M}$ is organized by sign sectors; in each sector, the local level sets are centered around a sector-dependent eigenaxis satisfying \(|r_x|=|r_y|=|r_z|\). Hence the WLN and the $\mathcal{M}$-landscapes define different level-set decompositions of the logical Bloch sphere. Therefore, no global level-set-preserving correspondence exists between the asymptotic WLN landscape and the phase-operator magic landscape in the standard even/odd cat encoding.
\end{proof}

The structural mismatch extends to qudit cat encodings. In dimension \(d\), the phase-operator
magic measure partitions the logical state space into \(\mathcal O(2^{d^2})\) sign sectors, each with its own eigenbasis \(U_\epsilon\) and local phase-rotation-invariant structure. In contrast, asymptotic cat-state WLN has a global coherent-basis structure: the WLN depends only on \(|c_i|\) in the coherent-component basis and is invariant under all relative phases. Thus, the magic-measure geometry is organized locally by sign sectors, whereas asymptotic WLN is organized globally by coherent-component amplitudes. This mismatch shows that \textit{the standard cat-code WLN landscape does not support a global, value-preserving GKP-type identification with the phase-operator magic measure. The form of any magic--non-Gaussianity relation is therefore encoding-dependent.} However, the phase-rotation invariance shared by the two geometries provides the mechanism for the local alignment constructed below.


\subsection{Local compatibility between WLN and the magic measure via asymptotic ACC}
Although the WLN and magic-measure landscapes are not globally compatible, their shared local \(U(1)^{d-1}\) phase-rotation invariance suggests a natural local alignment. We now show that this alignment can be achieved by choosing the asymptotic ACC basis according to the sector eigenbasis \(U_\epsilon\).

Once the sector eigenbasis is fixed, we identify the $\mathcal{M}$-maximizer with one coherent component and the remaining eigenvectors with the transverse coherent components,
\begin{equation}
    \begin{split}
        &v_{\max}\leftrightarrow \ket{\alpha_0},\\
    & v_j\leftrightarrow \ket{\alpha_j},
    \quad j=1,\dots,d-1,
    \end{split}
\end{equation}
where the coherent-state constellation satisfies the non-degenerate large-separation condition that $|\alpha_i-\alpha_j|\gg 1$ for any $i\neq j$. Up to local phase choices, this eigenbasis defines an SU($d$) matrix, denoted by $\tilde U_\epsilon$. If $A_\epsilon$ has degenerate eigenspaces, the construction is not unique; any orthonormal basis within the degenerate subspace gives a valid local ACC, with the alignment understood up to this residual unitary freedom. Acting with the corresponding basis transformation on the asymptotically orthogonal coherent-state components gives the SU(\(d\)) asymptotic cat code
\begin{equation}
    |\mu\rangle_{\epsilon,L}^{(\infty)}
    :=
    \sum_{i=0}^{d-1}
    (\widetilde U_\epsilon)_{\mu i}\ket{\alpha_i},
    \;
    \mu=0,\dots,d-1 .
    \label{eq:asymptotic_SUd_cat_code}
\end{equation}
With this choice, the WLN distinguished direction is mapped to the $\mathcal{M}$-maximizing direction \(v_{\max}\), while the transverse phase directions are mapped to those of the sector eigenbasis. Hence the local WLN level sets and the local $\mathcal{M}$-level sets share the same \(U(1)^{d-1}\) phase-rotation geometry, where the sign sector \(\epsilon\) specifies the local region of logical state space to be aligned and \(\widetilde U_\epsilon\) specifies the corresponding cat-code basis transformation.

We illustrate the construction in the qubit case. For the sign sector
\(\epsilon=(1,1,1,1)\) of
\(\mathcal O_2=\{I,\sigma_X,-\sigma_Y,\sigma_Z\}\), the sector eigenbasis is
\(\{\ket{v_+},\ket{v_-}\}\), as given in Eq.~\eqref{eq:qubitbasis}. As shown in Fig.~\ref{fig:qubitMagic}(c) and (d), this SU($2$) ACC aligns the level sets of WLN with those of the magic measure in the target sign sector $\epsilon$. In this sector, the WLN minimizer is mapped to the local $\mathcal{M}$-maximizer $(r_x,r_y,r_z)=(1,-1,1)/\sqrt{3}$, and the nearby WLN contours reproduce the same circular level-set structure around the sector eigenaxis. Thus the alignment is not value-preserving: it matches the local level-set geometry, rather than identifying high WLN with large $\mathcal{M}$.

It is worth noting that, after a sector is chosen, the distinguished SU($d$) ACC gives every encoded state the same type of WLN phase-rotation invariance, but the actual level-set alignment remains sector-specific because the eigenbasis depends on the sign sector. Thus the construction aligns WLN with the magic-measure geometry in the target sector used to define \(U_\epsilon\). In the qubit example, the opposite sector \((-,+,-)\) is also aligned, as shown in Fig.~\ref{fig:qubitMagic}(c), because its eigenbasis coincides with that of \((+,-,+)\) up to eigenvector reordering. Such additional sector alignment is not expected in general dimensions.

\begin{figure}
    \centering
    \includegraphics[width=1\linewidth]{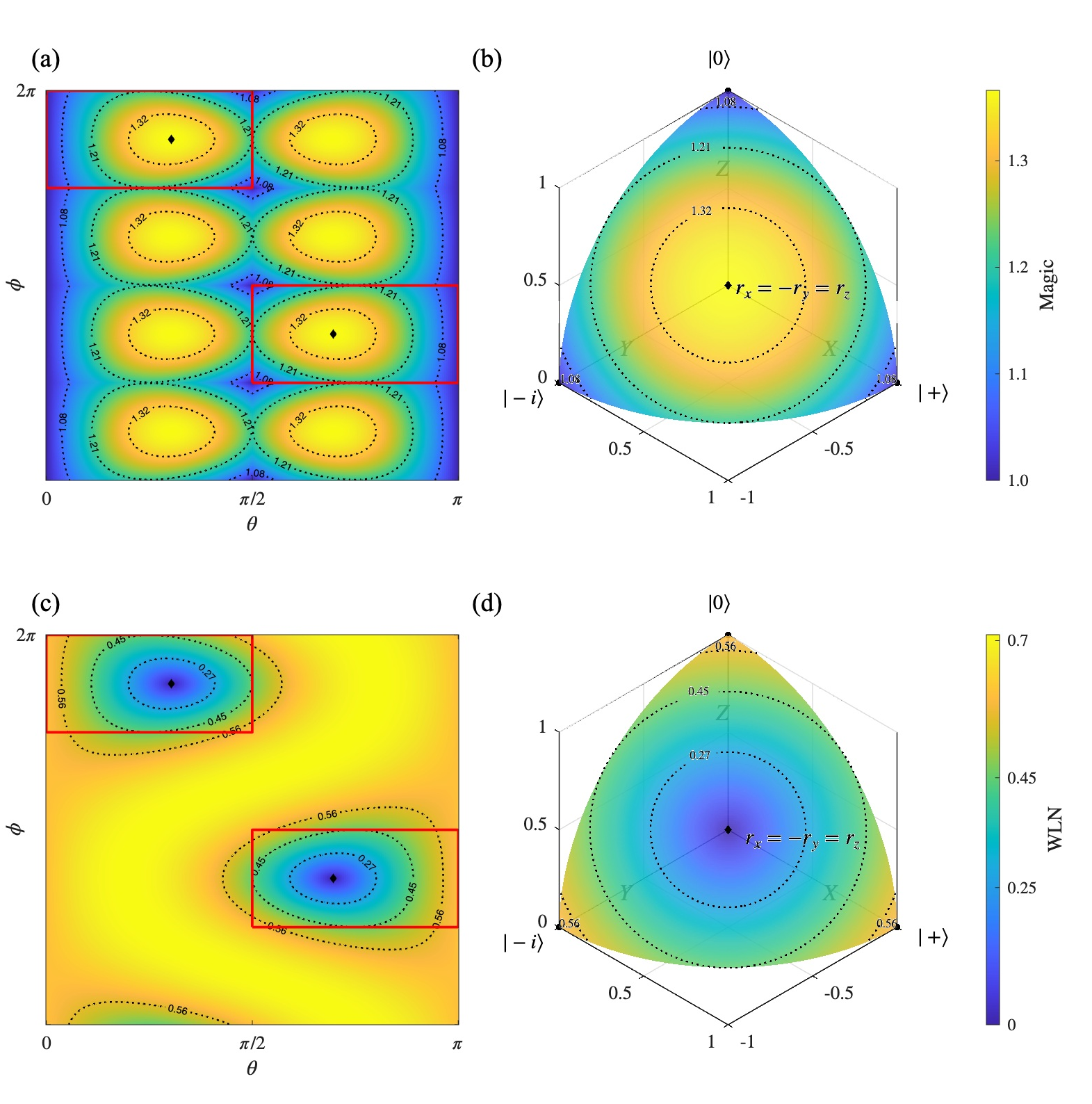}
    \caption{
Qubit illustration of the local level-set alignment between the magic measure and asymptotic WLN under the SU($2$) ACC.
(a) Magic-measure landscape on the flattened Bloch sphere. The two opposite sign sectors with coordinate signatures $(+,-,+)$ and $(-,+,-)$ are highlighted. The full landscape is symmetric over the eight octants, with one local maximum and circular level sets in each sign sector.
(b) Local view of the sector $(+,-,+)$. The local maximum of the magic measure lies along the sector eigenaxis \(r_x=-r_y=r_z\), and the equal-$\mathcal{M}$ contours form concentric circles around this axis.
(c) Asymptotic WLN landscape after applying the SU($2$) ACC. The WLN landscape is divided into two regions centered at the two coherent-state minima, but it shares a global phase-rotation-invariant structure in the rotated cat-code basis. The highlighted regions indicate the two sign sectors where the WLN level-set geometry locally aligns with the magic-measure level-set geometry via the SU(2) ACC.
(d) Local view of the sector \((+,-,+)\). The WLN minimum is aligned with the $\mathcal{M}$-maximum, and the nearby WLN contours form concentric circles around the same axis \(r_x=-r_y=r_z\), reproducing the local circular level-set geometry of the magic measure.
}
    \label{fig:qubitMagic}
\end{figure}

\section{Benchmarking physical realizations of the magic--WLN alignment}\label{sec:realizations}
The previous sections established that, in the large-separation regime, one can construct an ACC such that the level sets of the physical WLN are locally aligned with those of the logical magic resource. In particular, in the qubit case, the level sets are circles organized around a distinguished axis on the Bloch sphere. Recall that Refs.~\cite{Feng2024,Hahn2025} use the ideal GKP encoding for an exact connection between the logical magic resource and physical non-Gaussianity quantified by WLN. That connection is global and quantitative, and it indeed admits the local level-set alignment we introduced here. However, in both settings, physically realizable encodings necessarily involve finite parameters. A natural question is therefore how robust the ideal geometric alignment remains when the ideal codewords are replaced by finite-energy physical states. To address this, we benchmark, in the qubit case, how accurately the two physical encodings reproduce their respective ideal circular level-set geometries.

For the physical cat encoding, we use the SU($2$) cat code in Eq.~\eqref{eq:asymptotic_SUd_cat_code}, with finite coherent amplitude $\alpha_{1,2}=\pm\alpha$. For the physical GKP encoding, we consider its standard finite-energy approximation~\cite{Glancy2006,Matsuura2020,GKP2001}
\begin{equation}
\begin{split}
    \psi_0(q)
    &:=
    \langle q|0\rangle_L
    =
    N_0\sum_{a\in\mathbb Z}
    c_a\,g_\sigma(q-2a\sqrt{\pi}),
    \\
    \psi_1(q)
    &:=
    \langle q|1\rangle_L
    =
    N_1\sum_{b\in\mathbb Z}
    c_b\,g_\sigma(q-(2b+1)\sqrt{\pi}),
    \label{eq:approxGKP}
\end{split}
\end{equation}
where
\begin{equation}
\begin{split}
    g_{\sigma}(q-q_0)
    &=
    \frac{1}{(\pi\sigma^2)^{1/4}}
    \exp\!\left[-\frac{(q-q_0)^2}{2\sigma^2}\right],\\
    c_s
    &=
    \exp\!\left[
        -\frac{(2s\sqrt{\pi})^2}{2\Delta^2}
    \right].
    \label{eq:coeffGKP}
\end{split}
\end{equation}
Under this parametrization, the ideal GKP limit is recovered as the spikes become sharply localized $(\sigma\rightarrow 0)$ and the envelope broadens ($\Delta\rightarrow\infty$). The relevant control parameter is the ratio $\Delta/\sigma$ for quantifying the distance to the ideal GKP code. Details are given in Appendix~\ref{app:approxGKP}. 

To benchmark the physical realization of a code for recovering level-set alignment, we compare geometric deviations from the ideal circular level sets. Specifically, for a physical encoding, we sample $N$ logical states with equal WLN, and express them in an orthonormal basis of the plane perpendicular to the ideal rotation axis, with coordinates denoted by $\{(x_n,y_n)\}_{n=1}^{N}$. Ideally, they should be located on a circle.
To quantify the geometric error, we compute their radial distances $\{r_n|r_n=\sqrt{x_n^2+y_n^2}\}_{n=1}^{N}$, and calculate their standard deviation from the ideal circular radius to obtain the error with this WLN value. We then average the error over different WLN values, and use the averaged error, $\mathcal E$, as the geometric error of this physical code. Intuitively, it measures how accurately the WLN level sets reproduce the ideal circular level set, since in both the ideal GKP and asymptotic cat cases, the corresponding circular level sets lead to $\mathcal E=0$. Here, we compare $\mathcal E_{\mathrm{GKP}}(\Delta/\sigma)$, the geometric error of the approximate GKP code with parameter $\Delta/\sigma$, with $\mathcal E_{\mathrm{cat}}(|\alpha|)$ for a finite cat code. This comparison should be interpreted with respect to the local contour-deviation metric $\mathcal E$; it is not a general comparison of the physical resources required for cat-state and GKP-state preparation.

We numerically evaluate how $\mathcal E_{\mathrm{cat}}(|\alpha|)$ and $\mathcal E_{\mathrm{GKP}}(\Delta/\sigma)$ vary with their respective physical parameters. The results are shown in Fig.~\ref{fig:realization}. Our numerics show that, in both cases, the geometric error decreases as the relevant physical parameter increases. When comparing the performance of the two codes, for the tolerance $\mathcal E_{1}=1\times10^{-2}$, the finite SU($2$) cat code reaches this threshold at $|\alpha|^{\ast}(\mathcal E_{1})\geq 2.04$. By contrast, the approximate GKP code requires $(\Delta/\sigma)^{\ast}(\mathcal E_{1})\geq 20.7$ in order to reproduce the ideal local geometry within $\mathcal E_{1}$. For a smaller tolerance, such as $\mathcal E_{2}=5\times10^{-3}$, the corresponding thresholds are $|\alpha|^{\ast}(\mathcal E_{2})\geq 2.2$ for the cat code and $(\Delta/\sigma)^{\ast}(\mathcal E_{2})\geq 50$ for the approximate GKP code.

A rough comparison can be drawn from existing experimental demonstrations of approximate GKP states~\cite{Fluhmann2019,deNeeve2022,Matsos2024,Matsos2025,Larsen2025} and cat states~\cite{ourjoumtsev2006generating,gerrits2010generation,takase2021generation,endo2023non,endo2025high}. The preparation of approximate GKP states requires highly structured phase-space comb states, stronger squeezing, and significantly more complex state-engineering protocols~\cite{pizzimenti2024optical}. Recent results report $\Delta/\sigma \le 17$ with a squeezing level $\le 10$ dB, which still carries a geometric error of $1.21\times 10^{-2}$ as seen in Fig.~\ref{fig:realization}. In contrast,
approximate cat states can be generated using low-order non-Gaussian operations on Gaussian resources, typically requiring only squeezed vacuum states, linear optics, and a few heralded photon subtractions/additions. Cat states of $|\alpha| >2$ with high fidelity~\cite{takase2021generation} can be generated, for which the corresponding geometric error, $1.13\times 10^{-2}$, is also lower than the GKP value within this metric. Our observations, together with the available evidence, suggest that the local geometric alignment examined here is more readily accessible with current cat-state capabilities, and may therefore offer a practically relevant route for investigating magic--non-Gaussianity relations beyond ideal GKP encodings.

\begin{figure}
    \centering
    \includegraphics[width=1\linewidth]{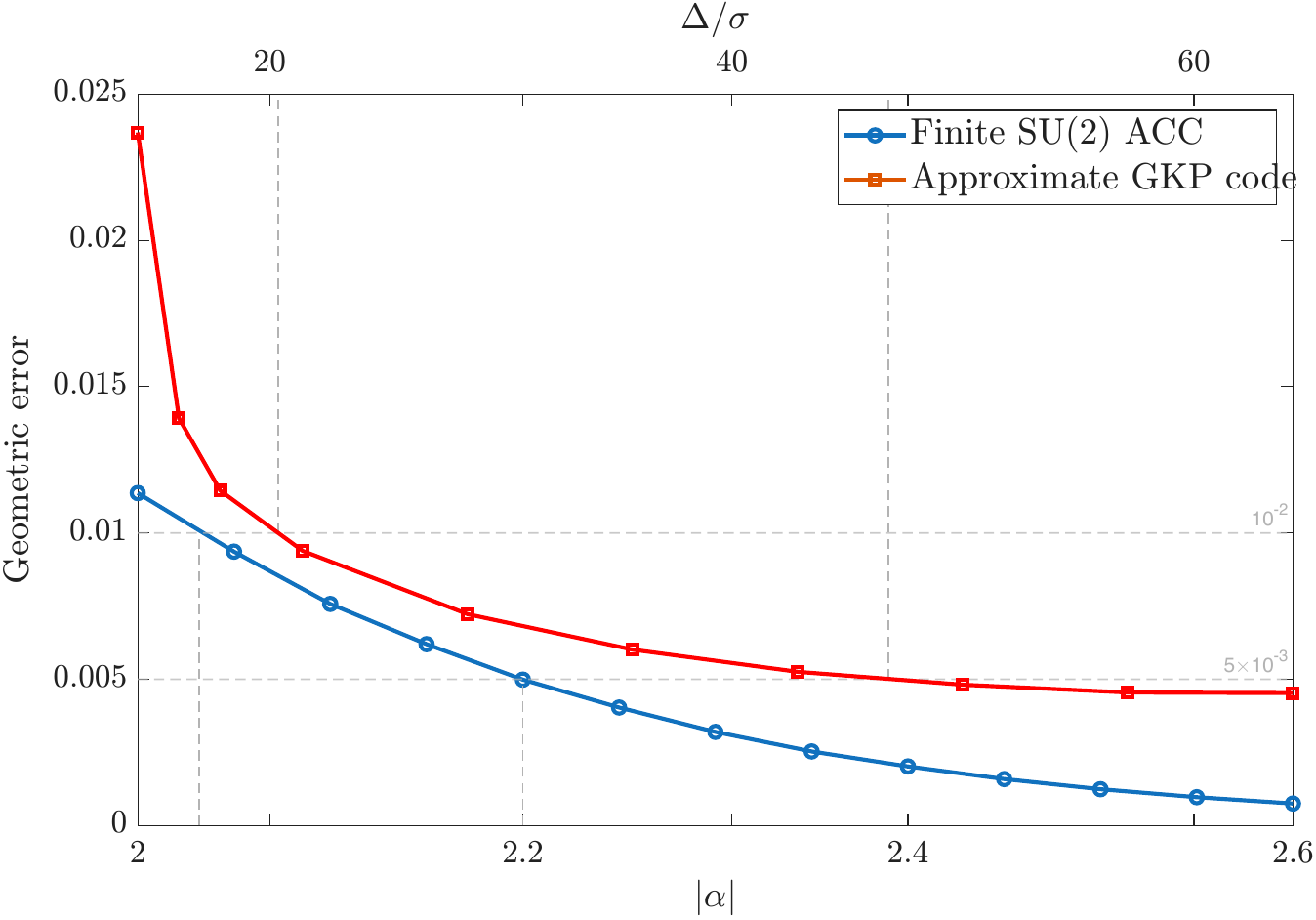}
\caption{
Geometric error for the approximate qubit GKP encoding and the finite SU($2$) cat encoding, plotted as functions of the control parameters $\Delta/\sigma$ and $|\alpha|$. The error \(\mathcal E\) measures the radial deviation of equal-WLN contours from the ideal circular level sets in the target Bloch-sphere octant. We extract \(80\) WLN contours from an \(80\times80\) logical-state grid. For the GKP data, we use a truncated finite-energy model with fixed spike width \(\sigma_0\simeq0.28\), and varying envelope width \(\Delta\). In both encodings, \(\mathcal E\) decreases as the corresponding physical parameter increases. More numerical details can be found in Appendix~\ref{app:numerics_benchmark}.
}
    \label{fig:realization}
\end{figure}

\section{Conclusion and Discussion}\label{sec:conclusions}
In this work, we establish a local geometric connection between the qudit magic measure and continuous-variable non-Gaussianity through an SU($d$) asymptotic cat code. Starting from the WLN of $d$-peaked cat states, we identify a rigid local phase-rotation-invariant structure in the large-separation regime. Exploiting this structure, we show that the cat-code WLN landscape does not reproduce the global, value-preserving magic--NG correspondence known for GKP encodings. Instead, by choosing a sector-adapted SU($d$) asymptotic cat-code basis, one can construct a local level-set alignment between WLN and the magic measure.

Our work addresses the question of whether a meaningful connection between magic and non-Gaussian resources can persist beyond the GKP setting. We show that such a local connection exists in cat encodings and can be established constructively from the intrinsic local geometry of the two resources. In this sense, our result is complementary to the exact GKP correspondence in Refs.~\cite{Feng2024,Hahn2025}. This suggests that cat-code architectures may provide a useful finite-parameter setting for exploring local geometric features reminiscent of GKP magic--NG correspondences, without requiring the full GKP lattice structure.
We further benchmark the physical realization of this alignment against the approximate qubit GKP family considered here, and find that the cat-code construction reaches the same contour-deviation tolerance at moderate cat amplitudes.
Despite the exact magic--NG correspondence available in the ideal GKP encoding, approaching this correspondence with physical finite-energy GKP states remains experimentally demanding, as discussed in Sec.~\ref{sec:realizations}. The ACC construction considered here offers a complementary route: rather than reproducing the full lattice structure of the GKP encoding, it uses the geometry of cat-code WLN to realize a local level-set compatibility between magic and non-Gaussian resources. 


\begin{acknowledgments}
We thank Hon Wai Lau, Nicolo Lo Piparo, and Peizhe Li for insightful discussions. This work was supported by the Center of Innovation for Sustainable Quantum AI under JST Grant No. JPMJPF2221.
\end{acknowledgments}

\nocite{*}
\bibliographystyle{apsrev4-2}
\bibliography{references}

\newpage
\onecolumngrid
\appendix

\section{High-frequency averaging of $|\cos|$-function}\label{app:limit}
In this section, we prove Eq.~\eqref{eq:limit}, which yields the analytical WLN results in the large-separation limit.
\begin{lemma}
    Let $f \in L^1(\mathbb{R}^2)$ and $\phi \in \mathbb{R}$. 
Let $\{k_n\} \subset \mathbb{R}^2$ be a sequence such that $|k_n| \to \infty$. 
Then
\begin{equation}
    \lim_{n\to\infty} 
\iint_{\mathbb{R}^2} 
f(r)\, \left|\cos(k_n\cdot r + \phi)\right| \, dr
=
\frac{2}{\pi}
\iint_{\mathbb{R}^2} f(r)\,d r.
\end{equation}
\end{lemma}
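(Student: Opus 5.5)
The plan is to treat this as a Riemann--Lebesgue-type statement for the periodic function $g(t)=|\cos t|$. Since $g$ is continuous and $\pi$-periodic with mean $\frac{1}{\pi}\int_0^\pi|\cos t|\,dt=\frac{2}{\pi}$, we expand it in a Fourier series. Because $g$ is even with period $\pi$, its Fourier expansion contains only the frequencies $2m$, $m\in\mathbb Z$:
\begin{equation}
|\cos t| = \frac{2}{\pi} + \sum_{m\neq 0} a_m e^{2imt},\qquad a_m=\frac{2(-1)^{m+1}}{\pi(4m^2-1)}.
\end{equation}
The coefficients are absolutely summable, $\sum_m |a_m|<\infty$, so the series converges uniformly on $\mathbb R$.

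Substituting $t=k_n\cdot r+\phi$ and integrating against $f\in L^1(\mathbb R^2)$, uniform convergence together with $f\in L^1$ lets us interchange sum and integral. This gives
\begin{equation}
\iint f(r)\,|\cos(k_n\cdot r+\phi)|\,dr=\frac{2}{\pi}\iint f(r)\,dr+\sum_{m\neq0}a_m e^{2im\phi}\,\hat f(-2mk_n),
\end{equation}
where $\hat f(\xi)=\iint f(r)e^{-i\xi\cdot r}\,dr$ denotes the Fourier transform. It therefore remains to show that the sum on the right tends to zero as $n\to\infty$.

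For each fixed $m\neq0$ we have $|2mk_n|\to\infty$, so $\hat f(-2mk_n)\to0$ by the Riemann--Lebesgue lemma in $\mathbb R^2$. Each term is also dominated, since $|a_m e^{2im\phi}\hat f(-2mk_n)|\le |a_m|\,\|f\|_{L^1}$, and this bound is summable in $m$. Dominated convergence for series (Tannery's theorem) then shows that the whole sum tends to $0$, which yields the claim.

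The only delicate point is the interchange of limit and infinite sum, and the absolute summability of the coefficients $a_m\sim m^{-2}$ is exactly what handles it. If one prefers to avoid Fourier series, an alternative is density: approximate $f$ in $L^1$ by finite linear combinations of indicator functions of rectangles. For each rectangle one rotates coordinates so that $k_n$ lies along an axis and averages $|\cos|$ over many periods, controlling the boundary contributions, which are of size $O(1/|k_n|)$. The approximation error is then handled through $\big||\cos|\big|\le1$, which gives a bound of $\|f-f_\varepsilon\|_{L^1}$ uniformly in $n$. Applying the lemma with $f=G_{(\alpha_i+\alpha_j)/2}$ recentered, whose integral equals $1$ by the normalization in Eq.~\eqref{eq:dCatCoeff}, and with $|k_n|=|\alpha_i-\alpha_j|\sqrt2\to\infty$, then gives Eq.~\eqref{eq:limit}.
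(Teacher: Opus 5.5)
Your proposal is correct and follows essentially the paper's proof: you expand $|\cos t|$ in its Fourier series with absolutely summable coefficients, apply Riemann--Lebesgue to each mode $2mk_n$, and use dominated convergence over $m$. Your exponential form and the paper's cosine form agree, and you make explicit the sum--integral interchange that the paper leaves implicit. One small remark on your closing application: recentering the Gaussian introduces an $n$-dependent phase $\phi_n$, which the lemma as stated (fixed $\phi$) does not literally cover, but this is harmless because your bound $\sum_{m\neq0}|a_m|\,|\hat f(-2mk_n)|$ does not depend on $\phi$.
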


\begin{proof}
The Fourier expansion of $|\cos t|$ is 
\begin{equation}
    |\cos t|
=
\frac{2}{\pi}
+
\frac{4}{\pi}\sum_{m=1}^{\infty}
c_m
\cos(2mt),
\end{equation}
with
\begin{equation}
    c_m =\frac{(-1)^m}{1-4m^2}.
\end{equation}
The coefficients are bounded by $\sum_{m=1}^\infty|c_m| < \infty.$ Substituting the $\cos t$ expansion into the integral leads to 
\begin{align}
    I_n &:= \iint_{\mathbb{R}^2} f(r)\, |\cos(k_n \cdot r + \phi)| \, d{r}\\
    &=\frac{2}{\pi} \iint_{\mathbb{R}^2} f({r})\,dr\\
    &+\frac{4}{\pi}\sum_{m=1}^{\infty} c_m\iint_{\mathbb{R}^2} f({r})\,\cos\!\big(2m({k}_n \cdot {r} + \phi)\big)\,d{r},
\end{align}
Since $f \in L^1(\mathbb{R}^2)$, its Fourier transform is continuous and satisfies
\begin{equation}
\widehat f(\xi)
=
\iint_{\mathbb{R}^2} f(r)\,e^{-i\xi\cdot r}\,d r\to 0
\quad \text{as } |\xi| \to \infty.
\end{equation}
This implies that, for each fixed $m \ge 1$,
\begin{align}
    &\iint_{\mathbb{R}^2} f(r)\cos\!\big(2m(k_n \cdot r + \phi)\big)\,d r
\\=&
\mathrm{Re}
\left(
e^{i2m\phi}
\iint_{\mathbb{R}^2} f(r)\,e^{i(2m k_n)\cdot r}\,dr
\right)\to 0,\;|2m k_n| \to \infty
\end{align}

For all $n,m$, we have
\begin{equation}
    \left|
\iint_{\mathbb{R}^2} f(r)\cos(\cdots)\,dr
\right|
\le
\iint_{\mathbb{R}^2} |f(r)|\,dr
= \|f\|_{L^1}.
\end{equation}
Since $\sum |c_m| < \infty$, the series is absolutely summable and uniformly bounded.
Therefore, by dominated convergence,
\begin{equation}
    \sum_{m=1}^{\infty} c_m
\iint_{\mathbb{R}^2} f(r)\cos(\cdots)\,dr
\to 0,\;|2m k_n| \to \infty.
\end{equation}
Thus
\begin{equation}
    \lim_{n\to\infty} I_n
=
\frac{2}{\pi}
\iint_{\mathbb{R}^2} f(r)\,dr
\end{equation}
\end{proof}

\section{The $\delta$-cat code as a limiting case of the asymptotic SU($2$) cat code}
\label{app:deltaCat}
In this appendix, we introduce an extreme reference model, which we call the
$\delta$-cat encoding. Its purpose is to make the large-separation limit fully
explicit: the WLN structure of the even/odd cat code converges to this model as
$|\alpha|\to\infty$.

We define the logical qubit basis states as superpositions of position
eigenstates localized at $\pm x_0$,
\begin{align}
    \ket{0}_L
    &=\frac{1}{\sqrt{2}}
    \bigl(\ket{x=x_0}+\ket{x=-x_0}\bigr),\\
    \ket{1}_L
    &=\frac{1}{\sqrt{2}}
    \bigl(\ket{x=x_0}-\ket{x=-x_0}\bigr),
\end{align}
where $\langle{x|x=x_0}\rangle=\delta(x-x_0)$. For a general logical qubit state $\ket{\psi}
    =
    \cos(\theta/2)\,\ket{0}_L
    +
    e^{i\phi}\sin(\theta/2)\,\ket{1}_L$, the corresponding Wigner function can be written as
\begin{equation}
    W_{\ket{\psi}}(x,p)
    =
    \frac{1}{2\pi}
    \Big[
        c_+\,\delta(x-x_0)
        +
        c_-\,\delta(x+x_0)
        +
        f(p)\,\delta(x)
    \Big],
\end{equation}
with
\begin{equation}
    \begin{split}
        c_+
    &=\frac{1+\sin\theta\cos\phi}{2},
    \qquad
    c_-=\frac{1-\sin\theta\cos\phi}{2},\\
    f(p)
    &=\cos\theta\cos(2x_0p)
      +\sin\theta\sin\phi\sin(2x_0p).
    \end{split}
\end{equation}
Thus, the Wigner function of the logical state is supported on three vertical
lines in phase space, namely $x=\pm x_0$ and $x=0$. On the two outer lines,
$x=\pm x_0$, the weights are the constants $c_\pm$, whereas on the central line
$x=0$ the Wigner function becomes the trigonometric function $f(p)$. In
particular, this interference contribution is periodic along the $p$ direction,
with period
\begin{equation}
    T=\frac{\pi}{x_0}.
\end{equation}
Since the Wigner function is periodic in $p$, it is natural to quantify the
WLN over one fundamental period, namely $p\in[0,\pi/x_0)$. We therefore define
the normalization over one period by
\begin{equation}
    \mathcal N_T(\ket{\psi})
    :=
    \int_{\mathbb R}dx
    \int_{0}^{\pi/x_0}dp\,
    W_{\ket{\psi}}(x,p)
    =
    \frac{1}{2x_0},
\end{equation}
and the corresponding absolute phase-space volume by
\begin{align}
    V_T(\ket{\psi})
    &:=
    \int_{\mathbb R}dx
    \int_{0}^{\pi/x_0}dp\,
    \bigl|W_{\ket{\psi}}(x,p)\bigr| \\
    &=
    \frac{1}{2x_0}
    \left(
        1+\frac{2}{\pi}\sqrt{1-\sin^2\theta\cos^2\phi}
    \right).
\end{align}
The period-normalized WLN is then
\begin{equation}
    \mathrm{WLN}_T(\ket{\psi})
    :=
    \log_2\!\left(
        \frac{V_T(\ket{\psi})}{\mathcal N_T(\ket{\psi})}
    \right),
\end{equation}
which yields
\begin{equation}
    \mathrm{WLN}_T(\ket{\psi})
    =
    \log_2\!\left[
        1+\frac{2}{\pi}\sqrt{1-\sin^2\theta\cos^2\phi}
    \right].
\end{equation}
This is exactly Eq.~\eqref{eq:qubitCatWLN}. Therefore, the $\delta$-cat code
reproduces the ideal asymptotic WLN structure of the even/odd cat code exactly,
and can be viewed as its limiting reference model as $|\alpha|\to\infty$.

\section{Local structure of the magic measure and proof of Lemma~\ref{lemma:magic}}
\label{app:magic}
In this appendix, we summarize the magic measure used in the main text and prove Lemma~\ref{lemma:magic}. The magic measure we use is defined on a phase-operator basis. It is computable in arbitrary dimension, reduces to the discrete-Wigner-based magic measure in odd dimensions, and recovers the stabilizer R\'enyi entropy in the multiqubit case~\cite{Hahn2025}.

Let $\rho$ be a single-qudit state of dimension $d$. We consider the Hermitian
operator basis
\begin{equation}
    O_{l,m}=e^{-i\pi ml/d}\,M_l Z_d^m,
\end{equation}
where $l,m\in\mathbb Z_d$,
\begin{align}
    M_l=\sum_{\substack{u,v\in\mathbb Z_d\\u+v=l\;(\mathrm{mod}\;d)}}\ket{u}\bra{v},
    \;\\
    Z_d=\sum_{j=0}^{d-1}e^{2\pi i j/d}\ket{j}\bra{j}.
\end{align}
These operators form a Hermitian basis of the qudit operator space and can be
viewed as a Hermitian generalization of the Pauli basis to arbitrary dimension. Strictly speaking, the operators are naturally defined for indices in $\mathbb Z_{2d}$. In the present work, however, we
restrict to $l,m\in\mathbb Z_d$, since the additional indices only reproduce the same operators up to sign factors, which are absorbed into the signature pattern. We define the corresponding coefficients by
\begin{equation}
    \chi_\rho(l,m):=\frac{1}{d}\operatorname{Tr}\!\left(O_{l,m}\rho\right),
\end{equation}
so that the magic measure is
\begin{equation}
    \mathcal M(\rho)=\sum_{l,m\in\mathbb Z_d}\bigl|\chi_\rho(l,m)\bigr|.
    \label{appeq:pureMagic}
\end{equation}
Thus, $\mathcal M(\rho)$ is the $\ell_1$ norm of the operator-basis coefficients of $\rho$.
After defining the sign sector, the structure of $\mathcal{M}$ in the logical space can be described sector by sector. In each sector, we have the following lemma.
\begin{lemma}
Let \(\ket{\psi}\in S_\epsilon\) be expanded in the eigenbasis of \(A_\epsilon\) as
\begin{equation}
    \ket{\psi}
    =
    a_{\max} v_{\max}
    +
    \sum_{j=1}^{d-1}a_j v_j,
    \qquad a_j\in\mathbb C,
\end{equation}
with \(a_{\max}\) chosen real without loss of generality. Then, within the fixed sign sector \(S_\epsilon\), the magic measure \(\mathcal M(\ket{\psi})\) depends only on the amplitudes \(|a_j|\) and is independent of the relative phases of the coefficients.
\end{lemma}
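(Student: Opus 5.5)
Inside the fixed sign sector the absolute values in Eq.~\eqref{appeq:pureMagic} can be removed. The plan is to rewrite $\mathcal M$ as the expectation value of the signature operator, expand that expectation in the eigenbasis of $A_\epsilon$, and observe that the cross terms vanish. Only the squared moduli $|a_j|^2$ then survive.

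First, for $\ket{\psi}\in S_\epsilon$, the definition of the sector gives $\operatorname{sgn}\chi_\rho(l,m)=\epsilon_{l,m}$ for every $(l,m)$. Hence $|\chi_\rho(l,m)|=\epsilon_{l,m}\chi_\rho(l,m)$. Note that $\chi_\rho$ is real because each $O_{l,m}$ is Hermitian, so the sign is well defined. Summing over $(l,m)$ and using linearity of the trace gives
\begin{equation}
    \mathcal M(\ket{\psi})=\frac1d\sum_{l,m}\epsilon_{l,m}\operatorname{Tr}\!\left(O_{l,m}\ket{\psi}\bra{\psi}\right)=\frac1d\bra{\psi}A_\epsilon\ket{\psi}.
\end{equation}
On coefficients that vanish, the sign convention is immaterial, since those terms contribute zero either way.

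Second, $A_\epsilon$ is Hermitian, so it has the spectral decomposition $A_\epsilon=\lambda_{\max}\ket{v_{\max}}\bra{v_{\max}}+\sum_{j}\lambda_j\ket{v_j}\bra{v_j}$ in the orthonormal basis $U_\epsilon$ of Eq.~\eqref{eq:u_epsilon}. Substituting the expansion of $\ket{\psi}$ and using orthonormality, every cross term $a_i^*a_j\bra{v_i}A_\epsilon\ket{v_j}$ with $i\neq j$ vanishes. This leaves
\begin{equation}
    \mathcal M(\ket{\psi})=\frac1d\left(\lambda_{\max}a_{\max}^2+\sum_{j=1}^{d-1}\lambda_j|a_j|^2\right).
\end{equation}
This expression depends only on the moduli $|a_j|$ and is manifestly invariant under $a_j\mapsto e^{i\theta_j}a_j$. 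As a by-product, it shows that $v_{\max}$ maximizes $\mathcal M$ within the sector whenever $v_{\max}\in S_\epsilon$. If $A_\epsilon$ has degenerate eigenvalues, any orthonormal eigenbasis works, because the argument uses only diagonality.

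The computation itself is straightforward. The only real subtlety is the scope of the claim, so I would state it carefully. The phase invariance holds only for those phase-rotated states that remain in $S_\epsilon$, that is, along the connected portion of the $U(1)^{d-1}$ orbit inside the sector. A phase rotation can move a state across a sector boundary, where some $\chi_\rho(l,m)$ changes sign. Past that point the formula above no longer applies and a different $A_{\epsilon'}$ governs. I would therefore phrase the conclusion as follows: on $S_\epsilon$, $\mathcal M$ coincides with a function of $(|a_{\max}|,|a_1|,\dots,|a_{d-1}|)$. This is exactly the local statement used in the main text.
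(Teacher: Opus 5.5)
Your proof is correct and follows essentially the same route as the paper: inside $S_\epsilon$ the absolute values resolve so that $d\,\mathcal M(\ket{\psi})=\bra{\psi}A_\epsilon\ket{\psi}$, and the spectral decomposition of $A_\epsilon$ leaves only $\lambda_{\max}|a_{\max}|^2+\sum_j\lambda_j|a_j|^2$, which the paper rewrites via normalization as $\lambda_{\max}-\sum_j(\lambda_{\max}-\lambda_j)|a_j|^2$. You also spell out the reduction to the expectation value of $A_\epsilon$, which the paper only implies, and you argue on all of $S_\epsilon$ rather than a neighborhood of $v_{\max}$; your caveat that phase rotations can leave the sector matches the paper's main-text remark.
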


\begin{proof}
For any state $\ket{\psi}$ in the neighborhood of $v_{\max}$ in the same sign sector, $\ket{\psi}$ can be written as
    \begin{equation}
    \ket{\psi}
    =
    \sqrt{1-\sum_{j=1}^{d-1}|a_j|^2}\,v_{\max}
    +\sum_{j=1}^{d-1} a_j v_j,
    \qquad a_j\in\mathbb{C}.
\end{equation}
Since $\ket{\psi}$ also lies in the  $\epsilon$ sector, substituting it into the expression for $\mathcal{M}$ in Eq.~\eqref{appeq:pureMagic} gives
\begin{align}
    d\cdot\mathcal{M}(\ket{\psi})
    &=
    \left(1-\sum_{j=1}^{d-1}|a_j|^2\right)\lambda_{\max}
    +\sum_{j=1}^{d-1}|a_j|^2\lambda_j \notag\\
    &=
    \lambda_{\max}
    -\sum_{j=1}^{d-1}(\lambda_{\max}-\lambda_j)|a_j|^2.
\end{align}
Hence, near the extremum $v_{\max}$, the local magic measure depends only on the amplitudes $|a_j|$ and is completely independent of the phases of $a_j$. 
\end{proof}

\section{Approximate GKP code and its Wigner functions}\label{app:approxGKP}
To characterize the WLN behavior of physical states under the approximate GKP code defined in Eq.~\eqref{eq:approxGKP}, we calculate their Wigner functions in this appendix. The approximate logical states are defined by two parameters, \(\Delta\), the width of the Gaussian envelope, and \(\sigma\), the width of each Gaussian spike. Using the notation of Eq.~\eqref{eq:coeffGKP} in the main text, we denote the spike positions by
\[
q_{0,a}=2a\sqrt{\pi},\qquad q_{1,b}=(2b+1)\sqrt{\pi}.
\]
The normalization factors are given by
\begin{align}
N_0^{-2}
&=
\sum_{a,b} c_a c_b
\exp\!\left[
-\frac{(q_{0,a}-q_{0,b})^2}{4\sigma^2}
\right],
\\
N_1^{-2}
&=
\sum_{a,b} c_a c_b
\exp\!\left[
-\frac{(q_{1,a}-q_{1,b})^2}{4\sigma^2}
\right].
\end{align}
The overlap between the two approximate logical basis states is
\begin{equation}
\langle 0|1\rangle_L
=
N_0 N_1
\sum_{a,b} c_a c_b
\exp\!\left[
-\frac{(q_{0,a}-q_{1,b})^2}{4\sigma^2}
\right].
\label{eq:logical_overlap}
\end{equation}

Since the Wigner function is linear, the Wigner function of an arbitrary approximate logical qubit can be decomposed into four contributions,
\begin{equation}
W_{00}(q,p),\qquad W_{11}(q,p),\qquad W_{01}(q,p),\qquad W_{10}(q,p),
\end{equation}
where \(W_{ij}(q,p)\) denotes the Wigner function of the operator \(\ket{i}_L\bra{j}_L\). Each contribution consists of a sum of Gaussian dyads of the form \(\ket{g_{q_a}}\bra{g_{q_b}}\), whose Wigner function is
\begin{equation}
W_{\ket{g_{q_a}}\bra{g_{q_b}}}(q,p)
=
\frac{1}{\pi}
\exp\!\left[
-\frac{\left(q-\frac{q_a+q_b}{2}\right)^2}{\sigma^2}
-\sigma^2 p^2
\right]
\exp\!\big(i(q_a-q_b)p\big).
\end{equation}
One then obtains
\begin{equation}
\begin{split}
W_{00}(q,p)&=
\frac{N_0^2}{\pi}
\sum_{a,b} c_a c_b\,
\exp\!\left[
-\frac{\left(q-(a+b)\sqrt{\pi}\right)^2}{\sigma^2}
-\sigma^2 p^2
\right]
\cos\!\big(2(a-b)\sqrt{\pi}\,p\big),\\
W_{11}(q,p)&=
\frac{N_1^2}{\pi}
\sum_{a,b} c_a c_b\,
\exp\!\left[
-\frac{\left(q-(a+b+1)\sqrt{\pi}\right)^2}{\sigma^2}
-\sigma^2 p^2
\right]
\cos\!\big(2(a-b)\sqrt{\pi}\,p\big),\\
W_{01}(q,p)&=
\frac{N_0 N_1}{\pi}
\sum_{a,b} c_a c_b\,
\exp\!\left[
-\frac{\left(q-\left(a+b+\frac{1}{2}\right)\sqrt{\pi}\right)^2}{\sigma^2}
-\sigma^2 p^2
\right]
\exp\!\big(i(2a-2b-1)\sqrt{\pi}\,p\big).
\end{split}
\end{equation}
By definition, \(W_{10}(q,p)=W_{01}(q,p)^*\). Therefore, for an arbitrary logical qubit
\[
\ket{\psi}=\alpha\ket{0}_L+\beta\ket{1}_L,
\]
the normalized Wigner function takes the form
\begin{equation}
W_{\ket{\psi}}(q,p)
=
\frac{
|\alpha|^2 W_{00}(q,p)
+
|\beta|^2 W_{11}(q,p)
+
\alpha\beta^* W_{01}(q,p)
+
\alpha^*\beta W_{01}(q,p)^*
}{
|\alpha|^2
+
|\beta|^2
+
2\operatorname{Re}\!\big(
\alpha^*\beta \langle 0 | 1 \rangle_L
\big)
}.
\end{equation}

\subsection{Numerical details for the finite-parameter benchmark}
\label{app:numerics_benchmark}

Here we summarize the numerical choices used in the finite-parameter comparison between the approximate GKP encoding and the finite SU($2$) cat encoding. The benchmark quantity is the geometric error \(\mathcal E\), defined from the radial deviation of equal-WLN contours from the corresponding ideal circular level sets in the target Bloch-sphere octant.

For the approximate GKP encoding, we use a truncated finite-energy comb of Gaussian spikes. The spike width is fixed at
\begin{equation}
    \sigma_0\simeq 0.28,
\end{equation}
while the envelope width \(\Delta\) is varied. This choice satisfies $6\sigma_0 \leq \sqrt{\pi},$ so that the effective \(3\sigma_0\) support of each Gaussian spike does not significantly overlap with neighboring spikes separated by \(\sqrt{\pi}\). For each value of \(\Delta\), the comb is truncated to include approximately $6\Delta/(2\sqrt{\pi})$ Gaussian peaks, corresponding to the effective \(3\Delta\) support of the Gaussian envelope. 

The phase-space grid is chosen to resolve both the individual Gaussian spikes and the interference fringes. In the \(q\) direction, the grid spacing is set to $dq=\sigma_0/5$ to resolve each spike width. In the \(p\) direction, if \(k_{\max}\) denotes the largest relevant position separation in the truncated comb, we use $dp=\pi/(6k_{\max})$ which gives approximately six points per shortest interference period. These choices ensure that both the Gaussian envelope and the fastest interference oscillations are resolved.

For the finite SU($2$) cat code, the Wigner function is evaluated on a fixed phase-space window $q,p\in[-5,5]$ using an \(800\times800\) grid. In the parameter range used below, this window contains the relevant support of the coherent components and their interference fringes. Increasing the grid resolution or enlarging the phase-space window does not produce a visible change in the plotted geometric error at the scale shown. The final numerical values used in Fig.~\ref{fig:realization} are listed in Tables~\ref{tab:cat_error_data} and~\ref{tab:gkp_error_data}.

\begin{table}[h]
\centering
\begin{tabular}{c|c}
\hline
Cat amplitude \( |\alpha| \) & Geometric error \( \mathcal E_{\mathrm{cat}} \) \\
\hline
2.00 & \(1.14\times10^{-2}\) \\
2.05 & \(9.36\times10^{-3}\) \\
2.10 & \(7.57\times10^{-3}\) \\
2.15 & \(6.20\times10^{-3}\) \\
2.20 & \(4.99\times10^{-3}\) \\
2.25 & \(4.03\times10^{-3}\) \\
2.30 & \(3.20\times10^{-3}\) \\
2.35 & \(2.53\times10^{-3}\) \\
2.40 & \(2.02\times10^{-3}\) \\
2.45 & \(1.59\times10^{-3}\) \\
2.50 & \(1.24\times10^{-3}\) \\
2.55 & \(9.68\times10^{-4}\) \\
2.60 & \(7.51\times10^{-4}\) \\
\hline
\end{tabular}
\caption{
Geometric error for the finite SU($2$) cat encoding as a function of the cat amplitude \( |\alpha| \).
}
\label{tab:cat_error_data}
\end{table}

\begin{table}[h]
\centering
\begin{tabular}{c|c|c}
\hline
Envelope width \(\Delta\) & Scale ratio \(\Delta/\sigma_0\) & Geometric error \( \mathcal E_{\mathrm{GKP}} \) \\
\hline
4.00  & 14.29 & \(2.37\times10^{-2}\) \\
4.50  & 16.07 & \(1.39\times10^{-2}\) \\
5.00  & 17.86 & \(1.14\times10^{-2}\) \\
6.00  & 21.43 & \(9.39\times10^{-3}\) \\
8.00  & 28.57 & \(7.22\times10^{-3}\) \\
10.00 & 35.71 & \(6.01\times10^{-3}\) \\
12.00 & 42.86 & \(5.25\times10^{-3}\) \\
14.00 & 50.00 & \(4.81\times10^{-3}\) \\
16.00 & 57.14 & \(4.55\times10^{-3}\) \\
18.00 & 64.29 & \(4.53\times10^{-3}\) \\
\hline
\end{tabular}
\caption{
Geometric error for the approximate qubit GKP encoding as a function of the scale-separation ratio \(\Delta/\sigma_0\), with fixed spike width \(\sigma_0\simeq0.28\).
}
\label{tab:gkp_error_data}
\end{table}
\end{document}